\begin{document}

\title{Attacking Similarity-Based Link Prediction in Social Networks}  
\author{Kai Zhou$^1$, Tomasz P. Michalak$^2$, Talal Rahwan$^3$, Marcin Waniek$^{2,3}$, and Yevgeniy Vorobeychik$^1$\\
	\small{$^1$Department of Computer Science and Engineering, Washington University in St. Louis, St. Louis\\
	$^2$Institute of Informatics, University of Warsaw, Warsaw, Poland\\
	$^3$Department of Computer Science, Khalifa University of Science and Technology, Abu Dhabi, UAE}}

\begin{abstract}  
Link prediction is one of the fundamental problems in computational
social science.
A particularly common means to predict existence of unobserved links
is via structural similarity metrics, such as the number of common
neighbors; node pairs with higher similarity are thus deemed more
likely to be linked.
However, a number of applications of link prediction, such as
predicting links in gang or terrorist networks, are adversarial,
with another party incentivized to minimize its effectiveness by
manipulating observed information about the network.
We offer a comprehensive algorithmic investigation of the problem of
attacking similarity-based link prediction through link deletion, focusing on two broad
classes of such approaches, one which uses only local information
about target links, and another which uses global network information.
While we show several variations of the general problem to be NP-Hard
for both local and global metrics, we exhibit a number of
well-motivated special cases which are tractable.
Additionally, we provide principled and empirically effective
algorithms for the intractable cases, in some cases proving worst-case
approximation guarantees.
\end{abstract}

%

\keywords{Computational social science; link prediction; security and privacy; adversarial attacks}  

\maketitle


\section{Introduction}

Link prediction is a fundamental problem in social network analysis.
A common approach to predicting a target link $(u,v)$ is to use an observed (sub)network to infer the likelihood of the existence of this link using a measure of \emph{similarity}, or closeness, of $u$ and $v$; we call this \emph{similarity-based link prediction}~\cite{liben2007link,wang2015link,al2006link,zhang2018link}.
For example, if $u$ and $v$ are individuals who have many friends in common, it may be natural to assume that they are themselves friends. Representational power of social networks implies very broad application of link prediction techniques, ranging from friend recommendations to inference of criminal and terrorist ties.

A crucial assumption in conventional similarity-based link prediction approaches is that the observed (sub)network is measured correctly. However, insofar as link prediction may reveal relationships which associated parties prefer to keep hidden---either for the sake of privacy, or to avoid being apprehended by law enforcement---it introduces incentives to manipulate network measurements in order to reduce perceived similarity scores for target links.

In order to systematically study the ability of an ``adversary'' to manipulate link prediction, we formulate attacks on link prediction as an optimization problem in which the adversary aims to minimize the total weighted similarity scores of a set of target links by removing a limited subset of edges from the observed subnetwork.
We present a comprehensive study of this algorithmic problem, focusing on two important subclasses of similarity metrics: \emph{local metrics}, which make use of only local information about the target link, and \emph{global metrics}, which use global network information.
We show that the problem is in general NP-Hard even for local metrics, and our hardness results are stronger for the commonly used Katz and ACT global similarity metrics (for example, the problem is hard for these metrics even if there is only a single target link).

On the positive side, we exhibit a number of important special cases when the problem is tractable.
These include attacks on local metrics when there is a single target link, or a collection of target \emph{nodes} (such as gang members) with the goal of hiding links among them.
Additionally, we present practical algorithms for the intractable cases, including global similarity metrics.
In a number of such settings, we are able to provide provable approximation guarantees.
Finally, we demonstrate the effectiveness of the approaches we develop through an extensive experimental evaluation.

\paragraph{Related Work}
Link prediction has been extensively studied in multiple domains such as social science \cite{liben2007link}, bioinformatics \cite{almansoori2012link}, and security \cite{huang2009time}. 
There are two broad classes of approaches for link prediction: the first based on structural similarity~\cite{liben2007link,lu2011link} and the second using learning~\cite{menon2011link,Wang17,al2006link,Wagn18}.
This work is focused on the former, which commonly use either local information~\cite{leicht2006vertex,zhou2009predicting}, rely on paths between nodes~\cite{katz1953new,lu2009similarity}, or make use of random walks~\cite{fouss2007random} (we view the latter two categories as examples of global metrics).

Our work is connected to several efforts studying vulnerability of social network analysis (SNA). Michalak et al. \cite{michalak2017strategic} suggest considering strategic considerations in SNA, but do not offer algorithmic analysis.
Waniek et al. study attacks against centrality measures and community detection~\cite{waniek2017construction, waniek2018hiding}.
There is considerable literature on hiding or anonymizing links on networks~(e.g., \cite{zhang2016measuring, waniek2018attack,Yu18}), but these approaches allow arbitrary graph modifications and are in any case heuristic, often proposing randomly swapping or rerouting edges.
In contrast, we provide the first comprehensive algorithmic study of the problem of hiding links by merely \emph{deleting} observed edges (i.e., preventing them from being observed), and the first strong positive algorithmic results.
\section{Problem Formulation}

\subsection{Similarity Metrics}

One of the major approaches for link prediction both in the network science literature and in practice is via the use of similarity metrics~\cite{liben2007link}.
Specifically, suppose we wish to know whether a particular link $(u,v)$ connecting nodes $u$ and $v$ exists.
A structural similarity metric $\mathsf{Sim}(u,v)$ quantifies the extent to which the nodes $u$ and $v$ have shared topological properties, such as shared neighbors, with the idea that higher similarity scores imply greater likelihood that $u$ and $v$ are connected.
Below, we will distinguish two types of similarity metrics: \emph{local}, which only use information about the nodes and their immediate neighbors, and \emph{global}, which make use of global information about the network.

\subsection{Attack Model}

At the high level, our goal is to remove a subset of observed edges in order to minimize perceived similarity scores of a collection of target (and, presumably, existing) links.
This could be viewed both from the perspective of vulnerability analysis, where the goal of link prediction is to identify relationships among malicious parties (such as gang members), or privacy, where the ``attacker'' is not malicious, but rather aims to preserve privacy of a collection of target relationships.

To formalize the problem, consider an underlying graph $\mathcal{G}=(V,E)$ representing a social network, where $V$ is the set of nodes and $E$ is the set of edges.
This graph is not fully known, and instead an analyst obtains answers for a collection of edge queries $Q$ from the environment, where for each query $(u,v) \in Q$, they observe the associated edge if $(u,v) \in E$, and determine that the edge doesn't exist otherwise.
The partially constructed graph $\mathcal{G}_Q = (V_Q,E_Q)$ based on the queries $Q$ is then used to compute similarity metrics $\mathsf{Sim}(u',v')$ for any potential edges $(u',v') \notin Q$.

An attacker has a collection of target links $H$ they wish to hide, and can remove a subset of at most $k$ edgs in $E_Q \equiv E \cap Q$ to this end.
While there are many ways to express the attacker's objective mathematically, a relatively natural and general approach is to minimize the weighted sum of similarity scores of links in $H$:

\begin{align}
	\label{OPT-1}
	\min_{E_a \subset E_Q} \  f_t(E_a) \equiv \sum_{(u,v) \in H} w_{uv}\mathsf{Sim}(u,v;E_a),\quad \text{s.t.}\ |E_a| \leq k,
\end{align}
where $w_{uv}$ is the weight representing the relative importance of hiding the link $(u,v)$, and we make explicit the dependence of similarity metrics on the set of removed  edges $E_a$.
Henceforth, we simplify notation by keeping this dependence implicit.
\section{Attacking Local Similarity Metrics}
Our analysis covers nine representative local similarity metrics (summarized in the supplement) that are commonly used in the state-of-the-art link prediction algorithms.  We first systematically divide local metrics into two sub-class: Common Neighbor Degree (CND) and Weighted Common Neighbor (WCN) metrics, depending on their special structures. Next, we show that attacking all local metrics is NP-Hard. We follow this negative result with an approximation algorithm exhibiting a solution-dependent bound. 
Finally, we present polynomial-time algorithms for well-motivated special cases.

We begin by introducing some notation. We denote $U =\{u_i\}$ as the union of end-nodes, termed \emph{target nodes}, of the target links in $H$. Assume $|U|= n$. Let $W =\{w_1,w_2,\cdots, w_m\}$ be the set of common neighbors of the target nodes, where each $w_i \in W$ connects to \emph{at least} two nodes in $U$.  Let $N(u_i,u_j)$ denote the set of common neighbors of $u_i$ and $u_j$. For any node $u_i \in V$, let $d(u_i)$ be its degree. We use a \emph{decision matrix} $X \in \{0,1\}^{m\times n}$ to denote the states of edges among the nodes in $W$ and $U$, where the entry $x_{ij}$ in the $i$-th row and $j$-th column of $X$ equals $1$ if there is an edge between $w_i$ and $u_j$; otherwise, $x_{ij} =0$.  We will say the attacker \emph{erases} $x_{ij}$ (when $x_{ij} = 1$) to denote the fact that the attacker deletes the edge between $w_i$ and $u_j$ (thus setting $x_{ij}$ as $0$).

\subsection{Classification of Local Metrics}

We now make a useful distinction between two classes of local metrics that use somewhat different local information.

\begin{definition}
	\label{def-CND}
	A metric $\mathsf{Sim}$ is a CND metric if the corresponding total similarity $f_t$ has the form $\sum_{r=1}^m W_r\frac{\sum_{i,j|(u_i,u_j)\in H}x_{ri}\cdot x_{rj}}{f_r(S_r)}$, where $f_r$ is a metric-dependent increasing function of $S_r$, the sum of $r$th row of decision matrix $X$, and $W_r$ is an associated weight.
\end{definition}
The metrics Adamic-Adar (AA), Resource Allocation (RA), and Common Neighbors (CN) are CND metrics. We note that the sum $\sum_{i,j|(u_i,u_j)\in H}x_{ri}\cdot x_{rj}$ is over all links in $H$. For simplicity, we write the sum as $\sum_{ij}$ henceforth.

\begin{definition}
	A metric $\mathsf{Sim}$ is a WCN metric if
	\begin{itemize}
		\item it has the form $\mathsf{Sim}(u_i,u_j) = \frac{|N(u_i,u_j)|}{g(d(u_i),d(u_j),|N(u_i,u_j)|)}$, where $g$ is \emph{strictly} increasing in $d(u_i)$ and $d(u_j)$. That is $g(d(u_i)-t,d(u_j)-s) \leq g(d(u_i),d(u_j))$ for any valid non-negative integers $t$ and $s$ and any valid value of $|N(u_i,u_j)|$.
		\item $\mathsf{Sim}$ is \emph{strictly} increasing in $|N(u_i,u_j)|$. That is, $\mathsf{Sim}(|N(u_i,u_j)| - t) \leq \mathsf{Sim}(|N(u_i,u_j)|)$, for any valid non-negative integer $t$  and any valid values of $d(u_i)$ and $d(u_j)$.	
	\end{itemize}	
\end{definition}

The WCN metrics include many common metrics, such as Jaccard, S\o rensen, Salton, Hub Promoted, Hub Depressed, and Leicht.

By the above definitions, we know a rational attacker will only delete edges between nodes in $W$ and nodes in $U$, since deleting other types of edges will either decrease $d(u_i)$ or $d(w_i)$, causing the similarity to increase. Thus, the total similarity $f_t$ is fully captured by the decision matrix $X$. As a result, attacking local similarities is formulated as an optimization problem, termed as \emph{Prob-Local}:

\begin{align}
\label{OPT-local}
\min_{X} \  f_t(X),\quad \text{s.t.}\ \mathsf{Sum}(X^0 - X) \leq k,
\end{align}
where $X^0$ is the original decision matrix and $\mathsf{Sum}(\cdot)$ denotes the element-wise summation.

\subsection{Hardness Results}
We start by making no restrictions on the set of target links $H$.  In this general case, we show  that attacking all local metrics is NP-hard.
\begin{theorem}
	\label{thm-general}
	Attacking local similarity metrics is NP-Hard.
\end{theorem}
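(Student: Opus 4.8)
The plan is to establish NP-hardness via a polynomial-time reduction from a known NP-complete problem, and the natural candidate is a covering/selection problem whose combinatorial structure mirrors the decision-matrix formulation in Prob-Local. Looking at the CND objective $\sum_{r=1}^m W_r \frac{\sum_{ij} x_{ri} x_{rj}}{f_r(S_r)}$, the attacker's leverage is that erasing a single entry $x_{rj}$ simultaneously affects every target link incident to $u_j$ through common neighbor $w_r$. This ``one deletion helps many targets'' coupling is exactly the flavor of a set-cover-type tradeoff, so I would aim to reduce from a problem such as \textbf{Maximum Coverage} (or its decision version), or possibly \textbf{3-Dimensional Matching} or \textbf{Vertex Cover}, depending on which yields the cleanest gadget.

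First I would fix the simplest concrete CND metric to make the reduction transparent---Common Neighbors (CN), where $f_r(S_r)\equiv 1$ and $W_r\equiv 1$, so that $f_t = \sum_{ij}\sum_r x_{ri}x_{rj}$ is literally a count of common-neighbor witnesses across target links. Proving hardness for this single metric suffices to show the general problem is NP-Hard, since CN is one of the covered local metrics. Next I would design a gadget graph $\mathcal{G}_Q$ in which each candidate edge deletion corresponds to selecting an element of the ground set in the source instance, and the budget $k$ corresponds to the number of sets (or elements) one is allowed to pick. The target links $H$ and the common-neighbor structure would be engineered so that the drop in $f_t$ achieved by a budget-$k$ deletion set is monotone in, and tightly tied to, the coverage value in the source instance. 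I would then argue the forward and backward directions: a good solution to the coverage instance yields a deletion set driving $f_t$ below the threshold, and conversely any deletion set meeting the threshold can be decoded (possibly after a normalization argument that it only erases the intended $x_{rj}$ entries) into a coverage solution of the required quality.

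The main obstacle I anticipate is controlling the interaction between deletions so that the objective decomposes cleanly. Because erasing $x_{rj}$ decrements the term $x_{ri}x_{rj}$ for \emph{every} partner $u_i$ sharing the common neighbor $w_r$, a single deletion can have nonlinear, overlapping effects across target links; the gadget must be built so these overlaps encode precisely the covering interactions I want and nothing spurious. A related subtlety is establishing that an optimal attacker confines itself to deleting the ``right'' edges---the paper already notes a rational attacker only deletes edges between $U$ and $W$, which helps, but within that set I would still need to rule out that the attacker gains an advantage from some unintended deletion pattern, typically handled by making alternative deletions strictly dominated (e.g., by attaching high-weight or high-multiplicity structure that makes the intended edges uniquely attractive).

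If the pure-CN reduction proves awkward to make gap-tight, a robust fallback is to choose a source problem whose objective is already a sum of products or a matching count---for instance reducing from a variant where the threshold comparison is exact---so that the equivalence ``threshold met $\iff$ coverage achieved'' holds without slack. I would carry out the steps in the order: (i) select the source NP-complete problem and state the reduction map, (ii) verify it runs in polynomial time and produces a valid Prob-Local instance for CN, (iii) prove completeness (source YES $\Rightarrow$ attack meets threshold), and (iv) prove soundness (attack meets threshold $\Rightarrow$ source YES), with the soundness direction being where the deletion-normalization argument does the heavy lifting.
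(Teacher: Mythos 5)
Your overall strategy---a polynomial-time reduction from a covering problem, specialized to the CN metric where the objective is just a count of common-neighbor witnesses---is the same family of argument the paper uses (it reduces from Vertex Cover, which is one of your three candidate source problems). But the proposal stops short of the one idea that makes the reduction actually work, and the difficulty you flag as the ``main obstacle'' is precisely the difficulty that idea eliminates. The paper's gadget builds a graph in which \emph{every} target pair $(v_i,v_j)$ has exactly one common neighbor, a single universal hub node $w$ adjacent to all the $v_i$ (plus some padding nodes to keep the graph connected and degrees positive), and sets the decision threshold to $\theta=0$. With a unique witness per target link and a zero threshold, the question ``can $k$ deletions drive $f_t$ to $0$?'' becomes exactly ``can $k$ of the edges $(w,v_i)$ be chosen so that every target pair loses its endpoint-to-$w$ edge on at least one side?''---which is Vertex Cover verbatim. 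The nonlinear, overlapping effects of a single deletion across many target links, which you worry about and propose to handle with a ``gap-tight'' coverage-value threshold, simply do not matter at threshold zero: one only needs each target pair to be hit at least once, not to account for how much each deletion decrements the sum. Without committing to this construction, your soundness direction (decoding an arbitrary budget-$k$ deletion set meeting a nonzero coverage threshold back into a Maximum Coverage solution) is genuinely at risk, because the amount by which $f_t$ drops under a deletion set is not in general a clean function of the coverage value.

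A second, smaller gap: you claim that proving hardness for CN alone suffices. The theorem as the paper intends it asserts hardness for \emph{each} of the nine local metrics (both CND and WCN classes), and the paper spends the last part of its proof showing how to pad the gadget (extra leaves on $w$ for CND metrics, extra leaves on each $v_i$ for WCN metrics so that denominators involving $d(v_i)$ stay positive) so that the same equivalence ``$f_t=0$ iff every target pair has no surviving common neighbor'' holds for every metric in both classes. Restricting to CN proves a strictly weaker statement. To complete your proof you would need (i) the universal-common-neighbor gadget with threshold $0$, and (ii) the metric-by-metric padding argument extending the equivalence beyond CN.
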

\begin{proof}
As attacking local similarity metrics is modelled as an optimization problem, we consider the corresponding decision problem: can an attacker delete up to $k$ edges such that the total similarity $f_t$ is no greater than a constant $\theta$? We note that the minimum possible $f_{t}$ for all local metrics in a connected graph is $0$. Thus, we consider the decision problem $P_L$, which is to decide whether one can we delete $k$ edges such that $f_{t} = 0$.

We use the vertex cover problem for reduction. Let $P_{VC}$ denote the decision version of vertex cover, which is to decide whether there exists  a vertex cover of size $k$ given a graph $\mathcal{G}$ and an integer $k$.

Given an instance of vertex cover (i.e., a graph $\mathcal{G} = (V,E)$ and an integer $k$), we construct our decision problem $P_L$ as follows. We first construct a new graph $\mathcal{Q}$ in the following steps:
\begin{itemize}
	\item For each node $v_i \in V$, create a node $v_i$ for graph $\mathcal{Q}$.
	\item Add another node $w$ to $\mathcal{Q}$ and connect $w$ to each $v_i$.
	\item Add $n = |V|$ nodes $u_1,\cdots,u_n$ and add an edge between each pair of nodes $(u_i,v_i)$.
	\item Add an edge between $(u_{i},u_{i+1})$, for $i=1,2,\cdots n-1$.
\end{itemize}
The set $H$ of target links  is then $H= \{(v_{i},v_{j})\}$ in $\mathcal{Q}$ if and only if $(v_{i},v_{j})$ is an edge in $\mathcal{G}$. Our decision problem $P_L$ is then constructed regarding this graph $\mathcal{Q}$ and target set $H$.
\begin{figure}[ht]
	\centering
	\includegraphics[scale=0.25]{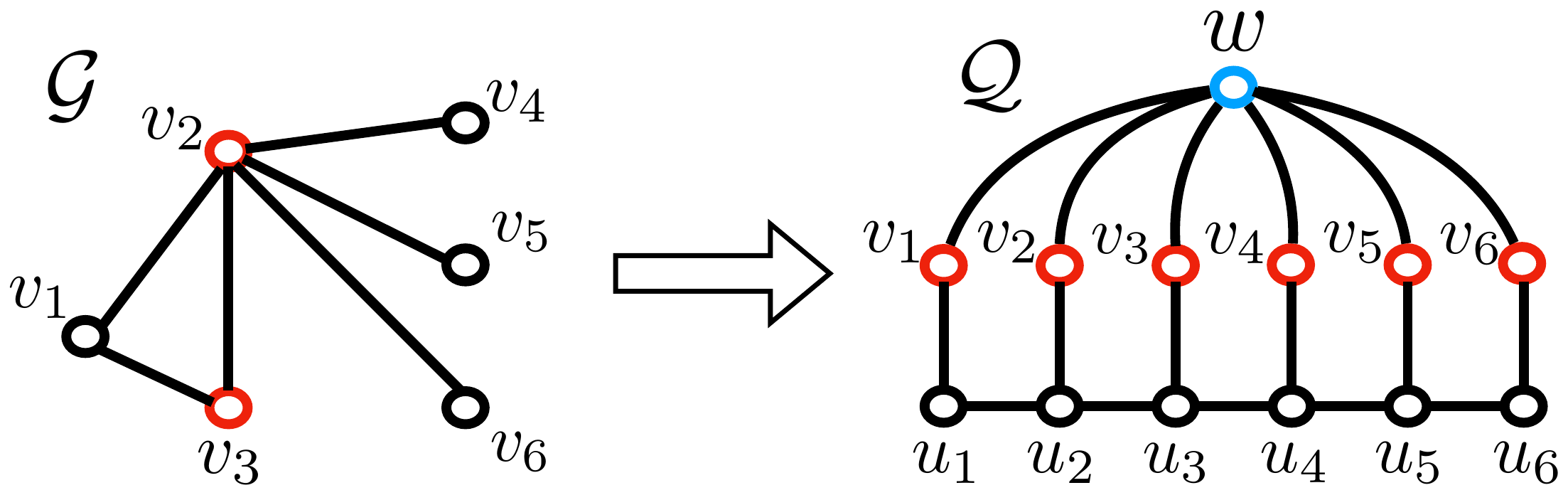}
	\caption{Example: constructing graph $\mathcal{Q}$ from $\mathcal{G}$. $V_c =\{v_2,v_3\}$ is a set cover in $\mathcal{G}$ while $H = \{(v_1,v_2),(v_1,v_3),(v_2,v_3),(v_2,v_4),(v_2,v_5),(v_2,v_6)\}$ is the set of target links in $\mathcal{Q}$.}
	\label{fig-plot-local}
\end{figure}

Now, we show $P_L$ and $P_{VC}$ are equivalent. We use CN metrics as an example and show that the same proof can be applied to other local metrics by slightly modifying the constructed graph $\mathcal{Q}$.

First, we show if there is a vertex cover of size $k$ in graph $\mathcal{G}$, then we can delete $k$ edges such that $f_{t}(H) = 0$ in $\mathcal{Q}$. Suppose $V_{c}$ is a vertex cover with $|V_{c}|=k$. Without loss of generality, let $V_{c}=\{v_{1},\cdots,v_{k}\}$. Then we show that delete $k$ edges $(v_{1},w),\cdots,(v_{k},w)$ will make $f_{t}(H) = 0$. Let $(v_i,v_j) \in H$ be an arbitrary target link. Then $(v_i,v_j)$ corresponds to an edge in $\mathcal{G}$. By the definition of vertex cover, we have at least one of $v_i$ and $v_j$ is in $V_c$. We can assume $v_i \in V_c$. Since $v_{i} and v_{j}$ has only one common neighbor $w$ in $\mathcal{Q}$, deleting $(v_{i},w)$ will make $CN(v_{i},v_{j})=0$. As $(v_i,v_j)$ is arbitrarily selected, we have $CN(v_{i},v_{j})=0$ for any target link $(v_i,v_j) \in H$.  Thus, we have found $k$ edges whose deletion will make $f_t(H)=0$. 

Second, we show if we can delete $k$ edges to make $f_{t}(H)=0$ in $\mathcal{Q}$, the we can find a vertex cover of size $k$ in $\mathcal{G}$. Suppose we found $k$ edges whose deletion will make $f_t(H) = 0$. Then each deleted edge must be $(w,v_i)$ for some $i = 1,\cdots, n$, since deleting other types of edges will not decrease $f_t(H)$. Without loss of generality, we assume the $k$ deleted edges are $(w,v_1),\cdots, (w,v_k)$. We then show that $V_c = \{v_1,\cdots, v_k\}$ forms a vertex cover in $\mathcal{G}$. Since $\forall (v_i,v_j) \in H$, $CN(v_i,v_j) \geq 0$, $f_t(H) = 0$ means that $CN(v_i,v_j) = 0$ for very target link. As each target link $(v_i,v_j)$ initially has one common neighbor $w$, we know at least one of $v_i$ and $v_j$ is in set $V_c$; otherwise, $CN(v_i,v_j) = 1$ making $f_t(H) > 0$. As each $(v_i,v_j)$ corresponds to an edge in $\mathcal{G}$, we know each edge in $\mathcal{G}$ has at least one end node in $V_C$. By definition, $V_c$ is a vertex cover of size $k$. 

As a result, $P_L$ and $P_{CV}$ is equivalent, proving that minimizing CN metric is NP-hard. The other local metrics are different variations of CN metrics. To make the above proof applicable for other metrics, we need to construct graph $\mathcal{Q}$ such that $f_t(H) = 0$ if and only if there is no common neighbors between each pair of target link. To achieve this, we can slightly modify the graph $\mathcal{Q}$ constructed previously for CN metric. For CND metrics, we can add some isolated nodes to $\mathcal{Q}$ and connect $w$ with each of the isolated nodes. For WCN metrics, we can add some isolated nodes for each node $v_i$ and connect each isolated node with $v_i$ to make sure that the degree of each $v_i$ is always positive. Then the previous proof holds for other local metrics. 
\end{proof}

\subsection{Practical Attacks}

Since in general attacking even local metrics is hard, we have two ways of achieving positive results: approximation algorithms and restricted special cases.
We start with the former, and exhibit several tractable special cases thereafter.

To obtain an approximation algorithm for the general case, we use submodular relaxation.
Specifically, we bound the denominator of each term of $f_t$ by constants as if all the budget were assigned to decrease that single term, arriving at an upper bound $f_{tu}$ for the original objective $f_t$.

For WCN metrics, let $g_{ij}$ be the denominator of $\mathsf{Sim}(u_i,u_j)$. For each $g_{ij}$, we bound it by $L_{ij} \leq g_{ij} \leq U_{ij}$, where $L_{ij}$ is obtained when $k$ edges are deleted and $U_{ij}$ is obtained when no edge is deleted. Take S\o rensen metric as an example, where $\mathsf{Sim}(u_i,u_j) = \frac{2|N(u_i,u_j)|}{d(u_i)+d(u_j)}$. Then $d_i^0 + d_j^0 - k \leq d(u_i) + d(u_j) \leq d_i^0 + d_j^0$, where $d_i^0$ and $d_j^0$ denote the original degrees of $u_i$ and $u_j$, respectively. In this way, each similarity is bounded as
\begin{align*}
\frac{|N(u_i,u_j)|}{U_{ij}} \leq \mathsf{Sim}(u_i,u_j) \leq \frac{|N(u_i,u_j)|}{L_{ij}}.
\end{align*}
Let $f^{WCN}_{tu} = \sum_{ij} \frac{|N(u_i,u_j)|}{L_{ij}}$ and $f^{WCN}_{tl} = \sum_{ij} \frac{|N(u_i,u_j)|}{U_{ij}}$. Then $f^{WCN}_{tl} \leq f^{WCN}_{t} \leq f^{WCN}_{tu}$. 

Similarly, for CND metrics, the denominator in each term $f_r(S_r)$ is bounded by $f_r(S^0_r) - k \leq f_r(S_r) \leq f_r(S^0_r)$, where $S^0_r$ denotes the sum of the $r$th row of the original decision matrix $X^0$. Then $f^{CND}_{tl} \leq f_t^{CND} \leq f^{CND}_{tu}$, where $f^{CND}_{tl} = \sum_{r=1}^m W_r \frac{\sum{ij} x_{ri}x_{rj}}{f_r(S_r)}$ and $f^{CND}_{tu} = \sum_{r=1}^m W_r \frac{\sum{ij}x_{ri}x_{rj}}{f_r(S_r) - k}$. Due to the similarity between the structures of $f_t^{WCN}$ and $f_t^{CND}$, we will focus on $f_t^{WCN}$ and omit the superscript $WCN$ in the following analysis. The proposed approximation algorithm the associated bound analysis are also applicable for $f_t^{CND}$.

\paragraph{Optimizing Bounding Function}
We now consider minimizing $f_{tu}$. Let $S'$ be the set of edges that the attacker chooses to delete. Then set $S'$ is associated with a decision matrix $X'$. For any $S\subset S'$, we have $X \geq X'$, where $X$ is the matrix associated with $S$ and $\geq$ denotes component-wise comparison. Define a set function $F(S) = f_{tu}(X^0) - f_{tu}(X)$. Clearly, $F(\emptyset) = 0$. Then minimizing $f_{tup}$ is equivalent to 
\begin{align}
\label{OPT-WCN-UP-sub}
\max_{S\subset E_Q}\  F(S), \quad \text{s.t.}\ |S|\leq k.
\end{align}

\begin{theorem}
	$F(S)$ is a monotone increasing submodular function.
\end{theorem}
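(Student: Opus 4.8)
The plan is to reduce the claim to the observation that, once the denominators have been frozen to the constants $L_{ij}$, the bounding objective $f_{tu}$ becomes a nonnegative weighted sum of products of decision-matrix entries, each of which behaves like a coverage function under edge deletion. First I would rewrite $f_{tu}$ purely in terms of the decision matrix: since every common neighbor of $u_i$ and $u_j$ belongs to $W$, we have $|N(u_i,u_j)| = \sum_{r=1}^m x_{ri}x_{rj}$, so that
\[
f_{tu}(X) = \sum_{(u_i,u_j)\in H} \frac{1}{L_{ij}}\sum_{r=1}^m x_{ri}x_{rj},
\]
where each weight $1/L_{ij}$ is a fixed positive constant (this is precisely the effect of bounding the denominator as if the entire budget were spent on a single term). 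Writing $X=X(S)$ for the matrix obtained after deleting the edges in $S$, this gives
\[
F(S) = \sum_{(u_i,u_j)\in H}\frac{1}{L_{ij}}\sum_{r=1}^m\bigl(x^0_{ri}x^0_{rj}-x_{ri}x_{rj}\bigr).
\]

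For monotonicity, I would invoke the component-wise inequality $X(S)\geq X(S')$ for $S\subseteq S'$, already recorded in the excerpt. Because all entries lie in $\{0,1\}$, this yields $x_{ri}(S)x_{rj}(S)\geq x_{ri}(S')x_{rj}(S')$ for every row $r$ and every target pair, so each summand of $F(S)$ is at most the corresponding summand of $F(S')$; summing against the positive weights $1/L_{ij}$ gives $F(S)\leq F(S')$.

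For submodularity, the crux, I would compute the marginal gain of deleting a single edge $e=(w_r,u_p)$ (with $x^0_{rp}=1$ and $e\notin S$). Erasing $x_{rp}$ changes only the entry in row $r$, column $p$, so it affects precisely the summands indexed by row $r$ and by target pairs incident to $u_p$, giving
\[
F(S\cup\{e\})-F(S) = \sum_{j:(u_p,u_j)\in H}\frac{1}{L_{pj}}\,x_{rp}(S)\,x_{rj}(S) = \sum_{j:(u_p,u_j)\in H}\frac{1}{L_{pj}}\,x_{rj}(S),
\]
using $x_{rp}(S)=1$. For $S\subseteq S'$, applying $x_{rj}(S)\geq x_{rj}(S')$ term by term shows that this marginal gain can only shrink as the deletion set grows, i.e.\ $F(S\cup\{e\})-F(S)\geq F(S'\cup\{e\})-F(S')$, which is the diminishing-returns characterization of submodularity.

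The only genuine obstacle here is conceptual rather than computational: one must recognize that freezing the denominators is exactly what linearizes the objective, turning each term $\frac{1}{L_{ij}}(x^0_{ri}x^0_{rj}-x_{ri}x_{rj})$ into (a scaled) indicator that at least one of the two edges $(w_r,u_i),(w_r,u_j)$ has been deleted---the canonical monotone submodular coverage function---after which closure of monotone submodularity under nonnegative linear combinations finishes the argument. A secondary point I would verify is the boundary behavior, namely that if $x^0_{rp}=0$ the deletion is vacuous and the marginal gain is $0$, and likewise if $e\in S$; both are consistent with the formula above and disturb neither monotonicity nor submodularity. The identical argument, with $f_r(S_r)-k$ in place of $L_{ij}$, applies verbatim to $f_{tu}^{CND}$.
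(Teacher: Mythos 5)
Your proof is correct and follows essentially the same route as the paper's: monotonicity via the component-wise ordering of decision matrices, and submodularity via an explicit computation of the marginal gain of deleting a single edge $(w_r,u_p)$, which reduces to $\sum_{j}\frac{w_{pj}}{L_{pj}}x_{rj}(S)$ and is term-by-term decreasing in $S$ (the paper writes $|N(u_i,u_j)|$ as an inner product of columns, but the calculation is identical). The only cosmetic discrepancy is that you drop the weights $w_{ij}$ from the numerators, though your prose makes clear they are fixed positive constants and the argument is unaffected.
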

\begin{proof}
	Assume $S\subset S'$, we need to show $F(S) \leq F(S')$. It is equivalent to show $f_{tu}(X) \geq f_{tu}(X')$. Let $C_i$ be the $i$th column of $X$. Then $|N(u_i,u_j)| = \langle C_i,C_j\rangle$, where $\langle C_i,C_j\rangle$ denotes their inner product. Now, $f_{tu}(X) = \sum_{ij} \frac{w_{ij}\langle C_i,C_j\rangle}{L_{ij}}$, where the weights $w_{ij}$ and $L_{ij}$ are constants. Since $X \geq X'$, we have $\langle C_i,C_j\rangle \geq \langle C_i',C_j'\rangle$ for every pair of $i,j$. Thus, $f_{tu}(X) \geq f_{tu}(X')$. That is, $F(S)$ is monotone increasing.
	
	Let an edge $e \notin S'$ be associated with the $p$-th row and $q$-th column entry in $X$. Let $e \cup S$ be associated with a matrix $X^e$, where the only difference between $X^e$ and $X$ is that $x^e_{pq} =0$ while $x_{pq} =1$. Similarly, let $e\cup S'$ be associated with a matrix $X^{'e}$. Define $\Delta (e|S) = F(e \cup S) - F(S)$ and $\Delta(e|S') = F(e \cup S') - F(S')$. Then we need to show $\Delta(e|S) \geq \Delta(e|S')$.
	
	\begin{align*}
	\Delta(e|S) &= f_{tu}(X) - f_{tu}(X^e) = \sum_{j} \frac{w_{jq}}{L_{jq}}\langle C_j,C_q \rangle - \sum_{j} \frac{w_{jq}}{L_{jq}}\langle C^e_j,C^e_q \rangle\\
	& =  \sum_{j} \frac{w_{jq}}{L_{jq}}x_{pj}\cdot x_{pq} - \sum_{j} \frac{w_{jq}}{L_{jq}}x^e_{pj}\cdot x^e_{pq} =  \sum_j \frac{w_{jq}}{L_{jq}} x_{pj},
	\end{align*}
	where the sum $\sum_j$ is over all pairs of $(j,q)$ such that $(u_j,u_q) \in H$. The second equality holds as deleting edge $e$ will only affect the $q$-th column. The last equality holds since $x_{pq} =1$ and $x^e_{pq} = 0$.
	
	Similarly, we can obtain $\Delta(e|S') =  \sum_{j} \frac{w_{jq}}{L_{jq}} x'_{pj}$. Then $\Delta(e|S) - \Delta(e|S') = \sum_{j}\frac{w_{jq}}{L_{jq}} (x_{pj}-x'_{pj})$. Since $(x_{pj} - x'_{pj}) \geq 0$, we have $\Delta(e|S) - \Delta(e|S') \geq 0$. By definition, $F(S)$ is submodular.
\end{proof}

Problem (\ref{OPT-WCN-UP-sub}) is to maximize a monotone increasing submodular function under cardinality constraint. The typical greedy algorithm for such type of problems achieves a $(1-1/e)$-approximation of the maximum.  In particular, the greedy algorithm will delete the edge that will cause the largest increase in $F(S)$ step by step until $k$ edges are deleted. Suppose the greedy algorithm outputs a sub-optimal set $S^*$, which corresponds to a minimizer $X_u^*$ of $f_{tu}(X)$.  We then take the value $f_t(X_u^*)$ as the approximation of $f_t(X^*)$, where $X^*$ is the optimal minimizer of $f_t$. We term this approximation algorithm as \emph{Approx-Local}.

\paragraph{Bound Analysis}
We theoretically analyze the performance of our proposed approximation algorithm \emph{Approx-Local}.\footnote{We note that for the CN metric in particular, the set function $F(S)$ is the actual objective. Consequently, the greedy algorithm above yields a $(1-1/e)$-approximation in this case.}
Let $X^*$, $X_{u}^*$, and $X_{l}^*$ be the minimizers of $f_t$, $f_{tu}$, and $f_{tl}$, respectively. Define  the \emph{gap} between $f_t$ and $f_{tu}$ as $\alpha(X) = f_{tu}(X) - f_t(X)$, which is a function of the decision matrix $X$. 

\begin{theorem}
	\label{thm-gap}
	The gap $\alpha(X)$ is an increasing function of $X$.
\end{theorem}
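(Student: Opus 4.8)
The plan is to prove monotonicity by a single-edge marginal argument. Since any matrix $X' \le X$ (component-wise) can be obtained from $X$ by a finite sequence of single-entry erasures (flipping one $x_{pq}$ from $1$ to $0$), it suffices to show that each such erasure does not increase $\alpha$; chaining the resulting inequalities then gives $\alpha(X') \le \alpha(X)$ whenever $X' \le X$, which is exactly the claimed monotonicity. So I would fix an edge $e$ associated with the $(p,q)$ entry and let $X^e$ be the matrix after erasing it. The key observation, already used in the submodularity proof, is that erasing $e$ only affects quantities tied to the target node $u_q$: the common-neighbor counts $|N(u_q,u_j)| = \langle C_q, C_j\rangle$ for pairs $(u_q,u_j) \in H$, and the degree $d(u_q)$. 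I would therefore write $\alpha(X) - \alpha(X^e) = \bigl(f_{tu}(X) - f_{tu}(X^e)\bigr) - \bigl(f_t(X) - f_t(X^e)\bigr)$ and bound the two differences separately.

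For $f_{tu}$, the denominators $L_{qj}$ are constants, so only the numerators move, and I can reuse the exact computation from the submodularity theorem to obtain $f_{tu}(X) - f_{tu}(X^e) = \sum_j \frac{w_{qj}}{L_{qj}}\, x_{pj}$, where the sum ranges over $j$ with $(u_q,u_j)\in H$. For $f_t$, by contrast, both the numerator and the denominator change: erasing $e$ lowers $d(u_q)$ by one, and since $g$ is strictly increasing in degree the perturbed denominator satisfies $g^e_{qj} \le g_{qj}$, while the numerator $|N(u_q,u_j)|$ drops by exactly $x_{pj}$ (and remains nonnegative since $|N(u_q,u_j)| \ge x_{pq}x_{pj} = x_{pj}$ before the erasure).

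The main obstacle is precisely this coupled change in $f_t$: the simultaneous shrinkage of the denominator works against us, since on its own it would enlarge the fraction $|N(u_q,u_j)|/g_{qj}$ and could make $f_t$ decrease by \emph{less} than $f_{tu}$ does. I would handle it by lower-bounding the subtracted term: replacing $g^e_{qj}$ by the larger $g_{qj}$ gives $\frac{|N(u_q,u_j)| - x_{pj}}{g^e_{qj}} \ge \frac{|N(u_q,u_j)| - x_{pj}}{g_{qj}}$, hence $f_t(X) - f_t(X^e) \le \sum_j w_{qj}\,\frac{x_{pj}}{g_{qj}}$. Comparing the two bounds and invoking $L_{qj} \le g_{qj}$ (so $1/L_{qj} \ge 1/g_{qj}$) yields $f_{tu}(X) - f_{tu}(X^e) \ge f_t(X) - f_t(X^e)$, i.e.\ $\alpha(X) \ge \alpha(X^e)$, which completes the marginal step.

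Finally I would note that the argument transfers to CND metrics with no essential change: there the per-term denominators are $f_r(S_r)$ with the constant lower bound $f_r(S^0_r) - k$ playing the role of $L$, and the same ``replace the shrunken denominator by the larger one, then use the constant lower bound'' comparison applies. Given the structural analogy between $f_t^{WCN}$ and $f_t^{CND}$ already invoked in the paper, I expect no additional difficulty in that case.
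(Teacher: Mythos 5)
Your proposal is correct and follows essentially the same route as the paper's proof: reduce to a single-entry erasure, observe that the numerator $\langle C_q,C_j\rangle$ drops by the same amount $x_{pj}$ in both $f_{tu}$ and $f_t$, control the shrinking denominator of $f_t$ via the monotonicity of $g$ in the degrees, and conclude by comparing $1/L_{qj}$ with $1/g_{qj}$ using the lower-bound property of $L$. The only cosmetic difference is that you replace the post-erasure denominator by the pre-erasure one while the paper does the reverse substitution; both are valid for the same reason, so this is not a genuinely different argument.
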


\begin{proof}
	Consider a particular term of $\alpha (X)$, which is denoted as  $\alpha_{ij}(X) = \frac{w_{ij}}{L_{ij}}\langle C^{X}_i,C^{X}_j\rangle - \frac{w_{ij}}{g(d(u_i),d(u_j),\langle C^{X}_i,C^{X}_j\rangle)}\langle C^{X}_i,C^{X}_j\rangle$, where $C^X_i$ denotes the $i$th column of $X$. For simplicity, write $g(d(u_i),d(u_j),$\\
	$\langle C^{X}_i,C^{X}_j\rangle)$ as $g(X)$.
	
	Consider an edge connecting to $u_i$ is deleted. This corresponds to the case when an entry in $C^X_i$ is erased. Denote the resulting matrix as $Y$. Then $X \geq Y$. The gap at $Y$ is $\alpha_{ij}(Y) = w_{ij}(\frac{\langle C^{Y}_i,C^{Y}_j\rangle}{L_{ij}} - \frac{\langle C^{Y}_i,C^{Y}_j\rangle}{g(Y)})$.
	\begin{align*}
	\frac{\alpha_{ij}(X) - \alpha_{ij}(Y)}{w_{ij}} &= \frac{\langle C^{X}_i,C^{X}_j\rangle - \langle C^{Y}_i,C^{Y}_j\rangle}{L_{ij}} + \frac{\langle C^{Y}_i,C^{Y}_j\rangle}{g(Y)} -\frac{\langle C^{X}_i,C^{X}_j\rangle}{g(X)}
	\end{align*}

	As $g$ is strictly increasing in $d(u_i)$ and $d(u_j)$, it is  increasing in $X$. Then we have $g(X) \geq g(Y)$. Thus,
	\begin{align*}
	\frac{\alpha_{ij}(X) - \alpha_{ij}(Y)}{w_{ij}} &\geq \frac{\langle C^{X}_i,C^{X}_j\rangle - \langle C^{Y}_i,C^{Y}_j\rangle}{L_{ij}} + \frac{\langle C^{Y}_i,C^{Y}_j\rangle}{g(Y)} -\frac{\langle C^{X}_i,C^{X}_j\rangle}{g(Y)}\\
	&= (\langle C^{X}_i,C^{X}_j\rangle - \langle C^{Y}_i,C^{Y}_j\rangle)(\frac{1}{L_{ij}} -\frac{1}{g(Y)}) \geq 0.
	\end{align*}
	The last inequality holds as  $L_{ij}$ is the lower bound (i.e., $L_{ij} \leq g(Y)$). As $\alpha(X)$ is the weighted sum over all pair of target links, we have  $\alpha(X) \geq \alpha(Y)$.
\end{proof}

Theorem \ref{thm-gap} states that the gap between the total similarity and its upper bound function is closing as we delete more edges (i.e., $X$ becomes smaller). We further provide a \emph{solution-dependent} bound of $\mathsf{g}  = f_t(X_{u}^*)- f_t(X^*)$, which measures the gap between the minimum of $f_t$ output by our proposed algorithm and the real minimum. 
\begin{align*}
\mathsf{g} \leq  f_{tu}(X_{u}^*)- f_t(X^*) \leq f_{tu}(X_u^*)- f_{tl}(X^*)\leq f_{tu}(X_u^*)- f_{tl}(X_l^*).
\end{align*}
Such a gap depends on the solutions $X_u^*$ and $X_l^*$.
We evaluate the gap through extensive experiments in Section \ref{Sec-exp}.

\subsection{Tractable Special Cases}\label{Sec-special-case}
We identify two important special cases for which the attack models are significantly simplified. The first case considers attacking a single target link and optimal attacks can be found in linear time for \emph{all} local metrics. The second case considers attacking a group of \emph{nodes} and the goal is to hide all possible links among them. We demonstrate that optimal attacks in this case can be found efficiently for the class of CND metrics.

Due to the space limit, we only highlight some key observations and present some important results. 
The full analysis is in the extended version \cite{extended} of the paper.

\subsubsection{Attacking a Single Link}
When the target is a single link $(u,v)$, the attacker will focus only on the links connecting $u$ or $v$ with their common neighbors, denoted as $N(u,v) = \{w_i\}_{i=1}^s$. Let $x_{iu} = 0$ denotes that attacker chooses to delete the link between $w_i$ and $u$ and $x_{iu} = 1$ otherwise.
\begin{proposition}
	For CND metrics, $\mathsf{Sim}(u,v) = \sum_{i=1}^s \frac{x_{iu}x_{iv}}{g(d(w_i))}$, where $g$ is a non-decreasing function of $d(w_i)$.
\end{proposition}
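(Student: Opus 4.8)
The plan is to specialize the CND definition (Definition~\ref{def-CND}) to the case of a single target link $H=\{(u,v)\}$ and simply read off the claimed form. First I would observe that, with $U=\{u,v\}$, a node belongs to $W$ exactly when it is adjacent to at least two nodes of $U$, i.e.\ when it is a common neighbor of both $u$ and $v$; hence $W=N(u,v)=\{w_i\}_{i=1}^s$ and the row index set of the decision matrix $X$ coincides with $\{1,\dots,s\}$. The matrix therefore has only two columns, which I label by $u$ and $v$, with entries $x_{iu}$ and $x_{iv}$, so that the sum over rows in Definition~\ref{def-CND} becomes a sum over the common neighbors $w_i$.

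Next I would collapse the numerator. Since $H$ contains the single pair $(u,v)$, the inner double sum $\sum_{i,j\mid(u_i,u_j)\in H}x_{ri}x_{rj}$ reduces to the single product $x_{ru}x_{rv}$. Substituting into the CND expression for $f_t$ and using $f_t=w_{uv}\mathsf{Sim}(u,v)$ for a lone target link, I obtain $\mathsf{Sim}(u,v)=\sum_{i=1}^s \frac{(W_i/w_{uv})\,x_{iu}x_{iv}}{f_i(S_i)}$, where $S_i=x_{iu}+x_{iv}$ is the $i$-th row sum. All that remains is to absorb the constant weight and rewrite the denominator $f_i(S_i)$ as a non-decreasing function $g$ of the degree $d(w_i)$.

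I expect this last step to be the only delicate point. The key observation is that the $i$-th summand is nonzero only when $x_{iu}=x_{iv}=1$, which forces $S_i=2$ and pins down the attacker-controlled part of the degree; moreover, since a rational attacker deletes only edges between $W$ and $U$ (and here $U=\{u,v\}$), every other edge incident to $w_i$ is untouched, so $d(w_i)=c_i+S_i$ with $c_i$ a fixed constant. I would then define $g(d(w_i)):=f_i(S_i)\,w_{uv}/W_i$, which is a reparametrization of $f_i$ through the increasing shift $S_i\mapsto c_i+S_i$. Because $f_i$ is increasing in $S_i$ and $w_{uv},W_i$ are positive constants, $g$ is non-decreasing in $d(w_i)$; this is exactly the defining feature of the Common Neighbor Degree family, whose per-neighbor weight depends on that neighbor's degree. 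This yields $\mathsf{Sim}(u,v)=\sum_{i=1}^s \frac{x_{iu}x_{iv}}{g(d(w_i))}$ as claimed, and I would close by sanity-checking against the three concrete CND metrics, where $g(d)=1$ for CN (the constant case explaining why only ``non-decreasing'' can be asserted), $g(d)=\log d$ for AA, and $g(d)=d$ for RA.
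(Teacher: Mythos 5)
Your proposal is correct. The paper itself states this proposition without proof (deferring the full analysis of the single-link case to its extended version), so there is no in-text argument to compare against; what you give is the natural direct derivation, namely specializing Definition~\ref{def-CND} to $H=\{(u,v)\}$ so that $W=N(u,v)$, the numerator collapses to $x_{iu}x_{iv}$, and the denominator $f_i(S_i)$ is re-expressed as a function of $d(w_i)=c_i+S_i$ using the fact that a rational attacker leaves all edges from $w_i$ to non-target nodes intact. Your handling of the one delicate point is sound, and your closing sanity check (constant $g$ for CN, $\log d$ for AA, $d$ for RA) correctly explains why the proposition can only assert that $g$ is non-decreasing rather than strictly increasing. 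The only cosmetic remark is that your constructed $g$ nominally depends on $i$ through $c_i$ and $W_i/w_{uv}$, whereas the proposition's notation suggests a single metric-level function; for the three concrete CND metrics this dependence disappears, so nothing is lost.
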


To minimize a CND, the attacker will remove
edges incident to common neighbors $w$ in increasing order of degree
$d(w)$. In fact, this algorithm is optimal and has a time complexity $\mathcal{O}(|N(u,v)|)$.

For WCN metrics, consider a tuple $(u,w,v)$ where $w$ is a common neighbor of $u$ and $v$. We divide the links surrounding $(u,v)$ into four sets: $E_1 =\{(u,w)\}$, $E_2 = \{(v,w)\}$, $E_3 = \{(u,s)\}$, and $E_4 = \{(v,s)\}$, where $s$ denotes a non-common neighbor of $u$ and $v$. As the attacker deletes links from $E_Q$, there are four possible states of the tuples between $u$ and $v$. In state 1, both $(u,w)$ and $(w,v)$ are deleted. In state 2, $(u,w)$ is deleted while $(w,v)$ is not. In state 3, $(w,v)$ is deleted while $(u,w)$ is not. In state 4, neither $(u,w)$ not $(w,v)$ is deleted. We use integer variables $y_1,y_2,y_3$ to denote the number of tuples in state 1, 2, 3, respectively. Furthermore, let $y_4$ and $y_5$ be the number of deleted edges from $E_3$ and $E_4$, respectively. In this way, the vector $(y_1,y_2,y_3,y_4,y_5)$ fully captures an attacker's  strategy.
\begin{proposition}
	A WCN metric can be written as $\mathsf{Sim}(u,v) = f(y_1,y_2,y_3,y_4,y_5)$ such that  $f$ is decreasing in $y_2$ and $y_3$ and $f$ is increasing in $y_4$ and $y_5$.
\end{proposition}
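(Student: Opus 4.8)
The plan is to turn the combinatorial attack description into an explicit algebraic formula for $\mathsf{Sim}(u,v)$ in the variables $(y_1,\dots,y_5)$, and then read off the four claimed monotonicities from the two defining axioms of a WCN metric together with the single structural fact that a common neighbor lies in the neighborhood of both endpoints.

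First I would record how the three quantities on which a WCN metric depends, namely $|N(u,v)|$, $d(u)$ and $d(v)$, change under an attack described by $(y_1,\dots,y_5)$. Writing $s$ for the original number of common neighbors and $d^0(u),d^0(v)$ for the original degrees, a common neighbor $w$ survives exactly when its tuple is in state $4$, so $|N(u,v)| = s - y_1 - y_2 - y_3$. An edge incident to $u$ is removed in states $1$ and $2$ and whenever an $E_3$-edge is deleted, giving $d(u) = d^0(u) - y_1 - y_2 - y_4$; symmetrically $d(v) = d^0(v) - y_1 - y_3 - y_5$. Substituting these into $\mathsf{Sim}(u,v) = |N(u,v)| / g(d(u),d(v),|N(u,v)|)$ defines $f(y_1,\dots,y_5)$. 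Note that $y_1$ is deliberately absent from the statement: a state-$1$ deletion lowers $|N(u,v)|$, $d(u)$ and $d(v)$ simultaneously, so its net effect is not sign-definite, whereas each of $y_2,y_3,y_4,y_5$ moves at most two of the three quantities.

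The directions $y_4$ and $y_5$ are immediate. Incrementing $y_4$ lowers $d(u)$ by one while leaving $|N(u,v)|$ and $d(v)$ fixed; by the WCN axiom that $g$ is (weakly) increasing in its degree arguments, the denominator can only decrease, so $\mathsf{Sim}(u,v)$ can only increase. Hence $f$ is increasing in $y_4$, and by the symmetric argument increasing in $y_5$.

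The crux is the direction $y_2$ (and, symmetrically, $y_3$), which I expect to be the main obstacle. Incrementing $y_2$ sends $(N,a,b) := (|N(u,v)|, d(u), d(v))$ to $(N-1, a-1, b)$, so the two WCN axioms now pull in opposite directions: dropping $N$ lowers the similarity (the axiom that $\mathsf{Sim}$ is increasing in $|N(u,v)|$), whereas dropping $a=d(u)$ shrinks $g$ and thereby raises it. What I need is the net inequality $\frac{N-1}{g(a-1,b,N-1)} \le \frac{N}{g(a,b,N)}$, equivalently $(N-1)\,g(a,b,N) \le N\,g(a-1,b,N-1)$. The two axioms alone cannot settle this, precisely because the effects compete; the missing ingredient is the structural constraint $N = |N(u,v)| \le \min\{d(u),d(v)\}$, valid because every common neighbor is a neighbor of each endpoint. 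The cleanest route is then to verify the displayed inequality on the closed forms of the six WCN metrics (Jaccard, S\o rensen, Salton, Hub-Promoted, Hub-Depressed, Leicht); after cross-multiplying, each case collapses to exactly $N \le \min\{d(u),d(v)\}$ (for several of them even to the weaker $N \le d(u)+d(v)$). The symmetric computation, exchanging the roles of $d(u)$ and $d(v)$ and using $N \le d(v)$, then handles $y_3$.
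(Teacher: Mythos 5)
The paper states this proposition without proof in the main text (the argument is deferred to the extended version), so there is no in-text derivation to compare against; judged on its own merits, your proposal is correct, and your handling of the delicate direction is more careful than the bare statement. The bookkeeping $|N(u,v)|=s-y_1-y_2-y_3$, $d(u)=d^0(u)-y_1-y_2-y_4$, $d(v)=d^0(v)-y_1-y_3-y_5$ is right, and monotonicity in $y_4,y_5$ is indeed immediate from the axiom on $g$. Your key observation --- that incrementing $y_2$ moves $(|N|,d(u))$ to $(|N|-1,d(u)-1)$, so the two WCN axioms pull in opposite directions and cannot by themselves decide the sign --- is not merely a presentational choice but a logical necessity: the abstract claim fails for the axiomatic WCN class even granting $|N|\le\min\{d(u),d(v)\}$. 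For instance $g=e^{d(u)}+e^{d(v)}$ satisfies both axioms, yet with $d(u)=d(v)=|N|=4$ the required inequality $(N-1)\,g(a,b)\le N\,g(a-1,b)$ reduces to $N(1-1/e)\le 1+e^{b-a}=2$ and fails. So the per-metric verification you propose is the only available route short of adding a further hypothesis on $g$. Your reductions check out: Jaccard and S\o rensen collapse to $|N|\le d(u)+d(v)$, Leicht and the two Hub metrics to $|N|\le d(u)$ (or are vacuous), and Salton to $|N|^2+d(u)\le 2|N|\,d(u)$, which needs $|N|\ge 1$ in addition to $|N|\le d(u)$; the former holds because a tuple must exist in state $4$ before it can be moved to state $2$. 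With that small addendum your argument is complete for the six metrics the paper names, which is the scope in which the proposition is actually used.
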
	

Our analysis shows that in an optimal attack, $y_1^* = y_4^* = y_5^* =0$ and $y_2^* + y_3^* = k$. That is, the attacker will always choose $k$ edges from $E_1 \cup E_2$ to delete. The following theorem then specifies how the attacker can optimally choose edges.
\begin{theorem}\label{thm-WCN-single}
	The optimal attack on WCN metrics with a single target link selects arbitrary $y_2^*$ links from $E_1$ and $(k-y_2^*)$ links from $E_2$ to delete with the constraint that for any selected links $(u,w_1) \in E_1$ and $(v,w_2) \in E_2$, $w_1 \neq w_2$. The value of $y_2^*$ is the solution of a single-variable integer optimization problem.
\end{theorem}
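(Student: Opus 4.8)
The plan is to leverage the structural facts already established in the text—namely that an optimal attack satisfies $y_1^* = y_4^* = y_5^* = 0$ and $y_2^* + y_3^* = k$—and to reduce the remaining search to a single integer variable. First I would translate the state-count description back into concrete edge deletions: the $y_2$ tuples in state~2 correspond to deleting $y_2$ edges of $E_1 = \{(u,w)\}$, and the $y_3 = k - y_2$ tuples in state~3 correspond to deleting $k - y_2$ edges of $E_2 = \{(v,w)\}$. The condition $y_1^* = 0$ forbids any common neighbor from having \emph{both} its incident edges deleted, which is exactly the requirement that the $E_1$-endpoints $w_1$ and the $E_2$-endpoints $w_2$ be disjoint, i.e.\ $w_1 \neq w_2$. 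This recovers the combinatorial form of the attack claimed in the theorem.

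Next I would express the three quantities on which any WCN metric depends in terms of $y_2$. Because the state-2 and state-3 deletions hit $k$ distinct common neighbors, we obtain $|N(u,v)| = s - k$, while $d(u) = d_u^0 - y_2$ and $d(v) = d_v^0 - (k - y_2)$, where $d_u^0, d_v^0$ denote the original degrees. Substituting into $\mathsf{Sim}(u,v) = |N(u,v)| / g(d(u),d(v),|N(u,v)|)$ yields
\[
\mathsf{Sim}(u,v) = \frac{s-k}{g\big(d_u^0 - y_2,\ d_v^0 - k + y_2,\ s - k\big)}.
\]
The key observation is that, once the whole budget is spent on distinct common neighbors, the numerator $s-k$ is a constant independent of $y_2$. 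Hence minimizing $\mathsf{Sim}$ is equivalent to \emph{maximizing} the denominator $g(d_u^0 - y_2,\ d_v^0 - k + y_2,\ s-k)$ over the finite integer domain $y_2 \in \{0,1,\dots,k\}$, intersected with the feasibility bounds $y_2 \le s$ and $k - y_2 \le s$. This is precisely the single-variable integer program asserted in the statement.

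To justify the word ``arbitrary'', I would invoke the defining form of a WCN metric: it is a function of $(d(u),d(v),|N(u,v)|)$ alone and, crucially, does \emph{not} depend on the degrees or identities of the individual common neighbors (in contrast to the CND case, where $g(d(w_i))$ appears). Consequently any two attacks sharing the same value of $y_2$—hence the same $d(u)$, $d(v)$, and $|N(u,v)|$—produce identical similarity, so which specific $y_2$ edges of $E_1$ and which $k-y_2$ edges of $E_2$ are removed is immaterial beyond satisfying the distinctness constraint.

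The main obstacle is conceptual rather than computational: one must argue that the trade-off encoded by $g$—its first argument $d_u^0 - y_2$ decreasing while its second argument $d_v^0 - k + y_2$ increases as $y_2$ grows—need not be monotone, so no closed-form optimizer is available and the problem genuinely reduces to a one-dimensional search. Since $g$ is only assumed strictly increasing in each degree argument, with no convexity or symmetry, the optimal split $y_2^*$ may lie anywhere in the domain; the payoff is that this domain contains only $k+1$ candidate values, so enumeration resolves it efficiently. I would close by handling the boundary case $k \ge s$, in which the budget suffices to destroy every common neighbor, driving $|N(u,v)|$ and therefore $\mathsf{Sim}(u,v)$ to zero.
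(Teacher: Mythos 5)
Your proposal is correct and follows essentially the route the paper intends: starting from the established facts $y_1^*=y_4^*=y_5^*=0$ and $y_2^*+y_3^*=k$, you express $d(u)$, $d(v)$, and $|N(u,v)|$ solely in terms of $y_2$, observe that a WCN metric depends only on these aggregates (which justifies the ``arbitrary'' choice subject to $w_1\neq w_2$), and reduce the problem to maximizing $g(d_u^0-y_2,\,d_v^0-k+y_2,\,s-k)$ over $y_2\in\{0,\dots,k\}$. The paper defers the detailed argument to its extended version, but your derivation matches the framework it sets up, including the correct handling of the degenerate case $k\geq s$.
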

The time complexity of solving the single-variable integer optimization problem is bounded in $\mathcal{O}(k)$.

\subsubsection{Attacking A Group of Nodes}
We consider the special case where 1) the target is a group of nodes $U$ and the links between each pair of nodes in $U$ consist the target link set $H$; 2) each link in $H$ has equal weight. In this case, optimal attacks on CND metrics can be found in polynomial time. 

\begin{proposition}
	\label{Prop-CND}
	For CND metrics, the total similarity $f_t$ has the form $\sum_{i=1}^m f_i(S_i)$, where $S_i$ is the sum of the $i$th row of $X$ and $f_i(S_i)$ is a convex increasing function of $S_i$.
\end{proposition}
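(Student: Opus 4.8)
The plan is to use the two defining features of this special case---that $H$ ranges over \emph{all} pairs of target nodes and that every target link carries the same weight---to collapse the numerator of each CND term into a function of the row sum alone, and then to verify convexity and monotonicity metric by metric. Fix a common neighbor $w_r$ and let $\phi_r(\cdot)$ denote the metric-dependent denominator appearing in Definition~\ref{def-CND} (I use a fresh symbol to avoid the clash with the $f_i$ of the statement). Because $H$ contains every pair $(u_i,u_j)$ with equal weight, the numerator of the $r$th term is $\sum_{i<j}x_{ri}x_{rj}$. Using $x_{ri}^2=x_{ri}$ together with $\left(\sum_i x_{ri}\right)^2=\sum_i x_{ri}+2\sum_{i<j}x_{ri}x_{rj}$, this equals $\binom{S_r}{2}=\tfrac{1}{2}S_r(S_r-1)$, a function of $S_r$ only. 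Moreover, an admissible attack deletes only $W$--$U$ edges, so $d(w_r)=S_r+c_r$ with $c_r$ constant (the number of neighbors of $w_r$ outside $U$); since $\phi_r$ depends on $w_r$ solely through $d(w_r)$, it too is a function of $S_r$ alone. This already gives the separable form $f_t=\sum_r f_r(S_r)$ with $f_r(S)=W\,\binom{S}{2}/\phi_r(S+c_r)$, the common target weight $W$ factored out.

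It then remains to check monotonicity and convexity of each $f_r$, which I would do for the three concrete CND metrics. For Common Neighbors the denominator is constant, so $f_r(S)=W\binom{S}{2}$ is a quadratic with first difference $S-\tfrac12>0$ and constant positive second difference, giving both properties at once. For Resource Allocation, $f_r(S)=W\,\tfrac{S(S-1)}{2(S+c_r)}$, and a partial-fraction decomposition rewrites this as $\tfrac{W}{2}\bigl[S-1-c_r+\tfrac{c_r(1+c_r)}{S+c_r}\bigr]$, from which $f_r''(S)=W\,c_r(1+c_r)/(S+c_r)^3\ge 0$ and $f_r'(S)\ge 0$ for $S\ge 1$ are immediate.

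The main obstacle is Adamic--Adar, where the denominator is logarithmic: $f_r(S)=W\,\tfrac{S(S-1)}{2\log(S+c_r)}$. Here I would pass to the continuous extension on $[2,\infty)$ and establish $f_r''\ge 0$ by clearing the positive factor $\log^3(S+c_r)$ from the numerator of $f_r''$, which reduces convexity to the nonnegativity of an expression that is quadratic in $\log(S+c_r)$ with coefficients rational in $S$. The real work lies in this transcendental inequality on the admissible range, where $d(w_r)\ge 2$ keeps $\log(S+c_r)$ bounded away from $0$; monotonicity is the easier half, following from positivity of the numerator $(2S-1)\log(S+c_r)-\tfrac{S(S-1)}{S+c_r}$ of $f_r'$. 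Finally I would note that continuous convexity yields convexity over the integer grid. The one point demanding care is the degenerate boundary $S\to 1$ with $c_r=0$, where $w_r$ drops to degree one and the combinatorial value (zero, since no pair survives) departs from the analytic extension; I would resolve this either by restricting to the active range $S\ge 2$ or by assuming $d(w_r)\ge 2$ throughout, which removes the discontinuity.
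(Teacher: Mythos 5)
The paper itself gives no proof of this proposition in the main text (it is deferred to the extended version), so there is nothing to compare against directly; I am judging your argument on its own. Your core reduction is certainly the intended one and is correct: because $H$ consists of \emph{all} pairs of target nodes with equal weight, the numerator of the $r$th CND term collapses to $\binom{S_r}{2}=\tfrac12 S_r(S_r-1)$ via the identity $(\sum_i x_{ri})^2=S_r+2\sum_{i<j}x_{ri}x_{rj}$, and since an admissible attack touches only $W$--$U$ edges, $d(w_r)=S_r+c_r$ makes the denominator a function of $S_r$ alone. Your verifications for Common Neighbors and Resource Allocation are complete and correct (the partial-fraction computation for RA is a clean way to get $f_r''\ge 0$).

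The gap is Adamic--Adar, which is precisely where the proposition has content, and you have not actually proved it: you reduce convexity to the nonnegativity of a quadratic in $\log(S+c_r)$ and then announce that ``the real work lies in this transcendental inequality'' without carrying it out. That inequality is not a formality --- the discriminant of that quadratic is not uniformly negative over the admissible $(S,c_r)$ range, so one must additionally argue that the actual value $\log(S+c_r)\ge\log 2$ lies outside the interval where the quadratic dips negative; as written, the claim is asserted rather than established. Separately, the boundary issue you flag is a genuine counterexample rather than a removable technicality: with natural logarithm and $c_r=0$, the integer-grid values at $S=1,2,3$ are $0$, $2/\log 2\approx 2.885$, $6/\log 3\approx 5.461$, whose second difference at $S=2$ is negative, so $f_r$ is \emph{not} convex on the full admissible grid and the proposition fails as literally stated unless one adds the hypothesis $d(w_r)\ge 2$ (or restricts to $S_r\ge 2$), as you propose. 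You should either prove the transcendental inequality explicitly on the restricted domain or state the extra hypothesis as part of the proposition; as it stands, the AA case --- the only nontrivial one --- is incomplete.
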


Proposition \ref{Prop-CND} states that $f_t$ for CND metrics can be written as a sum of independent functions, where each function $f_i$ is a convex increasing function.  We then propose a greedy algorithm termed \emph{Greedy-CND} to minimize $f_t^{CND}$. In essence, \emph{Greedy-CND} takes as the input $\mathbf{S}^0$, which is the row sum of the initial decision matrix $X$, and decreases an entry in $\mathbf{S}^0$ whose decreasing  causes the maximum decrease in $f_t^{CND}$ step by step until an upper bound of $k$ edges are deleted. 
This algorithm turns out to be optimal, as we prove in the extended version of the paper.

\section{Attacking Global Metrics}

In this section, we analyze attacks on two common global similarity metrics: Katz and ACT. We begin with attacks on a single link and show that finding optimal attack strategies is NP-hard even for a single target link.

Let $A \in \{0,1\}^{N\times N}$  and $D$ be the adjacency matrix and degree matrix of the graph  $\mathcal{G}_Q$, respectively.  The Laplacian matrix is defined as $L = D-A$. The pseudo-inverse of $L$ is  $L^{\dag} = (L -E)^{-1} +E$, where $E$ is an $(N \times N)$ matrix with each entry being $\frac{1}{N}$. We use a binary vector $\mathbf{y} \in \{0,1\}^M$ to denote the states of edges in $E_Q$, where $y_i = 0$ iff the $i$th edge in $E_Q$ is deleted. 
Finally, $\mathbf{y} \leq \mathbf{y}'$ ($A\leq A'$) is a component-wise inequality between vectors (matrices). 

\subsection{Problem Formulation for Katz Similarity}

The Katz similarity is a common path-based similarity metric~\cite{katz1953new}.
For a pair of nodes $(u,v)$, Katz similarity is defined as 
$$
\mathsf{Katz}(u,v) = \sum_{l = 1}^\infty\beta^l |path^l_{u,v}| = (\beta A + \beta^2 A^2 + \beta^3 A^3 + \cdots)_{uv},
$$
where $|path^l_{u,v}|$ denotes the number of walks of length $l$ between $u$ and $v$, $\beta > 0$ is a parameter and $(\cdot)_{uv}$ denotes the entry in the $u$th row and $v$th column of a matrix. 
By definition, the adjacency matrix $A$ is fully captured by the vector $\mathbf{y}$. Thus, $\mathsf{Katz}(u,v)$ is a function of $\mathbf{y}$, written as $\mathsf{Katz}_{uv}(\mathbf{y})$. 
As one would expect, it is an increasing function of $\mathbf{y}$.
\begin{lemma}\label{theorem-Katz}
	$\mathsf{Katz}_{uv}(\mathbf{y})$ is an increasing function of $\mathbf{y}$. 
\end{lemma}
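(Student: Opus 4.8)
The plan is to reduce the claim to the monotonicity of matrix powers under the component-wise order. First I would make explicit the relationship between the edge-state vector $\mathbf{y}$ and the adjacency matrix: since each component $y_i$ controls the presence of a single undirected edge, setting $y_i$ from $1$ to $0$ only zeroes out the corresponding symmetric off-diagonal entries of $A$ and leaves every other entry unchanged. Hence $\mathbf{y} \leq \mathbf{y}'$ implies $A \leq A'$ component-wise, where $A$ and $A'$ are the adjacency matrices associated with $\mathbf{y}$ and $\mathbf{y}'$. It therefore suffices to show that $A \leq A'$, with both being nonnegative $0/1$ matrices, implies $\mathsf{Katz}_{uv}$ computed from $A$ is no larger than the value computed from $A'$.

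The core step is an induction on the walk length $l$ establishing $A^l \leq (A')^l$ component-wise for every $l \geq 1$. The base case $l = 1$ is the hypothesis $A \leq A'$. For the inductive step I would write $(A^{l+1})_{ij} = \sum_k A_{ik}(A^l)_{kj}$ and exploit that every factor is nonnegative: since $0 \leq A_{ik} \leq A'_{ik}$ and $0 \leq (A^l)_{kj} \leq ((A')^l)_{kj}$ by the inductive hypothesis, each summand obeys $A_{ik}(A^l)_{kj} \leq A'_{ik}((A')^l)_{kj}$, and summing over $k$ yields $(A^{l+1})_{ij} \leq ((A')^{l+1})_{ij}$. Nonnegativity is exactly what makes the product monotone, so this is where the $0/1$ structure is essential; without it the component-wise order need not be preserved under multiplication.

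With the per-term inequality $(A^l)_{uv} \leq ((A')^l)_{uv}$ in hand, and since $\beta > 0$, each term $\beta^l (A^l)_{uv}$ of the series is dominated by the corresponding term built from $A'$. Because all these terms are nonnegative, the partial sums are monotone in $l$ and comparing the limits gives $\mathsf{Katz}_{uv}(\mathbf{y}) \leq \mathsf{Katz}_{uv}(\mathbf{y}')$, which is precisely the asserted monotonicity.

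The only genuine subtlety is convergence of the infinite series, which requires $\beta$ to lie below $1/\rho(A)$ for the relevant spectral radius $\rho$. I would dispatch this by noting that for nonnegative matrices the component-wise order is compatible with the spectral radius via Perron--Frobenius, so $A \leq A'$ gives $\rho(A) \leq \rho(A')$; thus whenever $\beta$ is small enough for the series to converge on the largest graph in play, it converges on every subgraph obtained by edge deletion, making the term-by-term comparison of convergent nonnegative series legitimate. I expect this convergence bookkeeping, rather than the induction itself, to be the most delicate part to state cleanly.
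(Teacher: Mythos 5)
Your proposal is correct and follows essentially the same route as the paper's proof: map $\mathbf{y}\leq\mathbf{y}'$ to $A\leq A'$ and compare the Katz series term by term using the fact that entrywise domination of nonnegative matrices is preserved under powers. You merely spell out the induction and the convergence bookkeeping that the paper leaves implicit.
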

\begin{proof}
	Let $A$ and $A'$ be the corresponding adjacency matrices of $\mathbf{y}$ and $\mathbf{y}'$. If $\mathbf{y} \leq \mathbf{y}'$, we have $A \leq A'$. Now, consider the $j$th term of the Katz similarity matrix $K$, which is $\beta^j A^j$. As every entry in $A$ is non-negative and $\beta > 0$, we have $\beta^j A^j \leq \beta^j A'^j$, for every $j$.  Thus, $\mathsf{Katz}_{uv}(\mathbf{y}) \leq  \mathsf{Katz}_{uv}(\mathbf{y}')$.
\end{proof}
As a result, deleting a link will always decrease $\mathsf{Katz}_{uv}(\mathbf{y})$, and the attacker would therefore always delete $k$ links in $E_Q$ (if $E_Q$ has at least $k$ links). 
Thus, minimizing Katz for a particular target link $(u,v)$ is captured by  \emph{Prob-Katz}:
$$
\min_\mathbf{y} \ \mathsf{Katz}_{uv}(\mathbf{y}), \quad \text{s.t.} \quad \sum_{i=1}^M y_i = M-k, \mathbf{y} \in \{0,1\}^M.
$$

\subsection{Problem Formulation for ACT}

The second global similarity metric we consider is based on ACT, which measures a distance between two nodes in terms of random walks.
Specifically, for a pair of nodes $(u,v)$, $\mathsf{ACT}(u,v)$, is the expected time for a simple random walker to travel from a node $u$ to node $v$ on a graph and return to $u$. Since $\mathsf{ACT}(u,v)$ is a distance metric, the attacker's aim is to \emph{maximize} $\mathsf{ACT}(u,v)$, defined as
$$\mathsf{ACT}(u,v) = V_G (L^\dag_{uu} + L^\dag_{vv} - 2L^\dag_{uv}),$$
where $V_G$ is the volume of the graph \cite{fouss2007random}.

Directly optimizing $\mathsf{ACT}(u,v)$ is hard. Indeed, deleting an edge may either increase or decrease $\mathsf{ACT}(u,v)$, so that unlike other metrics, ACT is not monotone in $\mathbf{y}$. 
Fortunately, \citeauthor{ghosh2008minimizing} \shortcite{ghosh2008minimizing} show that when edges are unweighted (as in our setting), $\mathsf{ACT}(u,v)$ can be defined in terms of \emph{Effective Resistance (ER)}: 
$\mathsf{ACT}(u,v) = V_G \mathsf{ER}(u,v).$
It is also not difficult to see that both the volume $V_G$ and ER can be represented in terms of $\mathbf{y}$.

We begin by investigating the effect of deleting an edge on $\mathsf{ER}(\mathbf{y})$.  
We use a well-known result by \citeauthor{doyle2000random}~\shortcite{doyle2000random} to this end.
\begin{lemma}[\cite{doyle2000random}]\label{lemma-ER}
	The effective resistance between two nodes is  strictly increasing when an edge is deleted. 
\end{lemma}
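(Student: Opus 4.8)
The plan is to prove the lemma through the variational characterization of effective resistance (Thomson's principle), which is exactly the mechanism underlying Rayleigh's monotonicity law in \cite{doyle2000random}. First I would recall that in the unweighted graph $\mathcal{G}_Q$, where every edge has unit resistance, $\mathsf{ER}(u,v)$ can be written as the minimum energy dissipated over all unit flows from $u$ to $v$:
\[
\mathsf{ER}(u,v) = \min_{f \in \mathcal{F}_{uv}} \sum_{e \in E_Q} f_e^2,
\]
where $\mathcal{F}_{uv}$ is the set of flows routing one unit of current from $u$ to $v$ (functions on oriented edges obeying Kirchhoff's conservation law at every node except $u$ and $v$). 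The unique minimizer is the electrical current $f^{*}$, uniqueness following because $\sum_e f_e^2$ is a strictly convex quadratic restricted to the affine set $\mathcal{F}_{uv}$.

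The key observation is that deleting an edge $e_0$ is equivalent to forbidding any current through it, so the effective resistance of the modified graph is the same minimum over the smaller feasible set $\{f \in \mathcal{F}_{uv} : f_{e_0} = 0\} \subseteq \mathcal{F}_{uv}$. Minimizing the same objective over a subset can only increase the optimum, which gives $\mathsf{ER}'(u,v) \ge \mathsf{ER}(u,v)$ and establishes monotonicity immediately. For the strict inequality I would invoke uniqueness: if $f^{*}$ carries nonzero current through $e_0$, then $f^{*}$ is infeasible after deletion, so any restricted minimizer differs from the \emph{unique} global minimizer of the strictly convex energy, forcing the restricted minimum to be strictly larger.

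The main obstacle is therefore the strictness claim rather than the monotonicity, since a generic argument only guarantees a weak increase; strict increase genuinely requires that $e_0$ carry current in the $u$--$v$ flow. I would handle the boundary cases by noting that a cut edge separating $u$ and $v$ sends $\mathsf{ER}(u,v)$ to infinity, while an edge carrying no current (e.g. by symmetry) is the degenerate situation to be excluded. To make the strictness fully explicit, the cleaner alternative I would present treats the deletion as a rank-one downdate $L' = L - b_{e_0} b_{e_0}^{\top}$ of the Laplacian, with $b_{e_0}$ the signed incidence vector of $e_0$, and applies a Sherman--Morrison-type identity for the pseudo-inverse (both $b_{e_0}$ and $b_{uv} = e_u - e_v$ lie in the range of $L^{\dag}$) to obtain the closed form
\[
\mathsf{ER}'(u,v) - \mathsf{ER}(u,v) = \frac{\big(b_{uv}^{\top} L^{\dag} b_{e_0}\big)^2}{1 - b_{e_0}^{\top} L^{\dag} b_{e_0}}.
\]
Its non-negativity, and positivity whenever the numerator is nonzero, follows because the denominator is $1 - \mathsf{ER}(a,b)$ for the endpoints $a,b$ of $e_0$, and the effective resistance across an edge is at most its unit resistance, with equality only when the edge is a bridge.
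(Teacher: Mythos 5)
The paper offers no proof of this lemma at all---it is imported from Doyle and Snell as Rayleigh's monotonicity law---so your proposal is supplying the argument that the citation points to rather than competing with one in the text. What you give is the standard and correct proof: Thomson's principle writes $\mathsf{ER}(u,v)$ as the minimum energy of a unit flow from $u$ to $v$, deleting $e_0$ restricts the feasible set to flows with $f_{e_0}=0$, and minimizing over a subset can only raise the optimum; the rank-one downdate $L'=L-b_{e_0}b_{e_0}^{\top}$ together with the pseudo-inverse Sherman--Morrison identity is a correct and useful way to make the increment explicit, with the denominator $1-b_{e_0}^{\top}L^{\dag}b_{e_0}=1-\mathsf{ER}(a,b)$ positive exactly when $e_0$ is not a bridge. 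The one substantive point---which you identify honestly rather than paper over---is that the lemma as stated is too strong: the increase is strict only when the deleted edge carries nonzero current in the $u$--$v$ electrical flow, i.e., when $b_{uv}^{\top}L^{\dag}b_{e_0}\neq 0$. Deleting a pendant edge far from $u$ and $v$, or the balanced middle edge of a Wheatstone bridge, leaves $\mathsf{ER}(u,v)$ unchanged, so ``strictly increasing'' should read ``non-decreasing'' unless the zero-current case is excluded. This does not harm the paper's downstream use: the corollary that $\mathsf{ER}(\mathbf{y})$ is decreasing in $\mathbf{y}$, and hence that an optimal attack may as well delete all $t$ permitted edges, needs only weak monotonicity, which your flow argument establishes in two lines.
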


The following lemma is then an immediate corollary.
\begin{lemma}\label{theorem-ER}
	$\mathsf{ER}(\mathbf{y})$ is a decreasing function of $\mathbf{y}$. 
\end{lemma}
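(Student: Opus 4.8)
The plan is to derive Lemma~\ref{theorem-ER} directly as a corollary of Lemma~\ref{lemma-ER}, exactly as the surrounding text anticipates. Recall that $\mathbf{y} \in \{0,1\}^M$ encodes which edges of $E_Q$ are present, with $y_i = 0$ meaning the $i$th edge has been deleted, and that the ordering $\mathbf{y} \le \mathbf{y}'$ is the component-wise partial order. The statement that $\mathsf{ER}(\mathbf{y})$ is \emph{decreasing} in $\mathbf{y}$ means precisely that if $\mathbf{y} \le \mathbf{y}'$ then $\mathsf{ER}(\mathbf{y}) \ge \mathsf{ER}(\mathbf{y}')$: fewer edges (a smaller $\mathbf{y}$) yield larger effective resistance. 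So the goal reduces to showing this monotonicity in the partial order.

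First I would reduce the general component-wise comparison to a single-edge comparison. Suppose $\mathbf{y} \le \mathbf{y}'$. Then the edge set encoded by $\mathbf{y}$ is obtained from that encoded by $\mathbf{y}'$ by deleting exactly those edges $i$ with $y_i = 0$ but $y'_i = 1$. I would list these differing edges and interpolate: construct a chain $\mathbf{y}' = \mathbf{z}^{(0)} \ge \mathbf{z}^{(1)} \ge \cdots \ge \mathbf{z}^{(t)} = \mathbf{y}$, where each step flips a single coordinate from $1$ to $0$, i.e., deletes one edge. Lemma~\ref{lemma-ER} tells us that deleting a single edge strictly increases the effective resistance between the two fixed endpoints $u$ and $v$, so $\mathsf{ER}(\mathbf{z}^{(s)}) \le \mathsf{ER}(\mathbf{z}^{(s+1)})$ at each step. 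Chaining these inequalities along the telescoping sequence yields $\mathsf{ER}(\mathbf{y}) \ge \mathsf{ER}(\mathbf{y}')$, which is exactly the claimed monotonicity.

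The only subtlety worth flagging is that Lemma~\ref{lemma-ER} asserts \emph{strict} increase upon deleting an edge, whereas the corollary only needs the non-strict (weak) monotonicity of $\mathsf{ER}(\mathbf{y})$ as a function of $\mathbf{y}$; the weak inequality follows trivially when $\mathbf{y} = \mathbf{y}'$ (no edges deleted, equality holds) and from the strict inequality when $\mathbf{y} < \mathbf{y}'$. I would also note one edge case: the single-edge deletion result presumes the effective resistance remains well-defined, which holds provided $u$ and $v$ stay in the same connected component; if a deletion disconnects $u$ from $v$ the resistance is taken to be $+\infty$, and the monotone inequality is preserved under that convention. Since this is a one-paragraph corollary, I do not expect any genuine obstacle—the entire content is the reduction from the partial-order comparison to a sequence of single-edge deletions, each governed by Lemma~\ref{lemma-ER}.
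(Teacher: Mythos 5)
Your proof is correct and follows exactly the route the paper intends: the paper gives no explicit argument, simply declaring the lemma an immediate corollary of Lemma~\ref{lemma-ER}, and your chain of single-edge deletions is precisely the implicit interpolation argument being invoked. The remarks on strict versus weak monotonicity and on disconnection are reasonable additional care but do not change the substance.
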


As a result, maximizing $\mathsf{ER}(\mathbf{y})$ would always entail deleting all allowed edges.
Let $t$ be the maximum number of edges that can be deleted.
Then, maximizing $\mathsf{ER}(\mathbf{y})$ can be formulated as \emph{Prob-ER}:
\begin{align*}
\max_\mathbf{y} \ \mathsf{ER}(\mathbf{y}), \quad \text{s.t.} \ \sum_{i=1}^M y_i = M-t, \mathbf{y} \in \{0,1\}^M.
\end{align*}
However, while $\mathsf{ER}(\mathbf{y})$ increases as we delete edges, volume $V_G = 2\sum_{i=1}^M y_i$ decreases.
Fortunately, since volume is linear in the number of deleted edges, we reduce the problem of optimizing $\mathsf{ACT}$ to that of solving \emph{Prob-ER} by solving the latter for $t = \{0,\ldots,k\}$, and choosing the best of these in terms of $\mathsf{ACT}$.
Similarly, hardness of \emph{Prob-ER} implies hardness of optimizing $\mathsf{ACT}$.
Consequently, the rest of this section focuses on solving \emph{Prob-ER}.

\subsection{Hardness Results}
We prove that minimizing Katz and maximizing ER between a single pair of nodes by deleting edges with budge constraint are both NP-hard. 

\begin{theorem}
	\label{thm-global}
	Minimizing Katz similarity and maximizing ACT distance is NP-hard even if $H$ contains a single target link.
\end{theorem}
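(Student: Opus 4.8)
The plan is to reduce from \textsc{Vertex Cover} (VC), which is consistent with the toolkit already used in Theorem~\ref{thm-general} and is NP-complete, and to push the hardness all the way down to a \emph{single} target link by encoding the combinatorial structure entirely into the walks (resp.\ resistance) between one pair of nodes. Given a VC instance $(G,k)$ with $G=(V,E)$, I would build a gadget graph $\mathcal{Q}$ containing a single pair of target nodes $u$ and $v$, so the lone target link is $(u,v)$. The central design principle is to make each vertex $v_i\in V$ correspond to one deletable \emph{gate} edge $g_i$, and to route, for every edge $e=(v_i,v_j)\in E$, a dedicated $u$--$v$ branch passing through \emph{both} gates $g_i$ and $g_j$. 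Concretely, for each $v_i$ I introduce nodes $a_i,b_i$ with gate $g_i=(a_i,b_i)$ together with edges $(u,a_i)$ and $(b_i,v)$; for each $e=(v_i,v_j)\in E$ I add a connector node $c_e$ with edges $(b_i,c_e),(c_e,a_j)$, producing the branch $u - a_i - b_i - c_e - a_j - b_j - v$. With this design, deleting $g_i$ simultaneously destroys the branches of \emph{all} edges incident to $v_i$, and the branch of $(v_i,v_j)$ survives iff neither $g_i$ nor $g_j$ is deleted. Hence $k$ gate deletions destroy every edge-branch iff the corresponding $k$ vertices form a vertex cover of $G$.

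For the Katz direction I would fix $\beta>0$ small enough that the similarity is dominated by the shortest branch walks, and choose a threshold $\theta$ separating ``every edge-branch cut'' from ``at least one edge-branch survives.'' Since $\mathsf{Katz}_{uv}$ is increasing in $\mathbf{y}$ (Lemma~\ref{theorem-Katz}), the attacker deletes exactly $k$ edges; the remaining task is to argue an optimal budget is spent entirely on gate edges, since each gate lies on the most $u$--$v$ walks, so deleting a connector or a $u$/$v$-incident edge is weakly dominated. Combined with the iff above, this gives: $\mathsf{Katz}_{uv}\le\theta$ is achievable with $k$ deletions iff $G$ has a vertex cover of size $k$.

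For the ACT direction I would exploit the reduction already established in the excerpt, that maximizing $\mathsf{ACT}$ reduces to solving \emph{Prob-ER} for each budget $t\in\{0,\dots,k\}$; thus it suffices to prove \emph{Prob-ER} NP-hard. The same gadget is convenient because the edge-branches sit in parallel between $u$ and $v$, so effective resistance composes in a series-parallel fashion: cutting a gate opens every branch through it, and by Lemma~\ref{theorem-ER} $\mathsf{ER}(\mathbf{y})$ is monotone, forcing all $t$ edges to be deleted. The surviving vertex paths $u - a_i - b_i - v$ of uncovered vertices act as a backbone keeping $u$ and $v$ connected (so $\mathsf{ER}$ stays finite), and I would set a resistance threshold attained iff all edge-branches are opened, i.e.\ iff the deleted gates form a vertex cover.

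The main obstacle in both cases is controlling the \emph{global} contributions that a single target pair cannot localize. For Katz this is the infinite tail $\sum_l\beta^l|path^l_{u,v}|$: I must bound the higher-order walk counts (detours and back-and-forth traversals of surviving branches) and choose $\beta$ and $\theta$ so the combinatorial cover/non-cover distinction strictly dominates these residual terms, making the equivalence exact rather than approximate. For ER the analogous difficulty is that effective resistance is genuinely global, determined by $L^\dagger$; the series-parallel structure is precisely what reduces it to an additive branch-by-branch computation, and verifying that the threshold cleanly separates covers from non-covers—while the backbone keeps everything finite—is the delicate step. Establishing that deleting gate edges is optimal, so the attacker cannot circumvent the cover structure by deleting cheaper peripheral edges, is the common crux underlying both reductions.
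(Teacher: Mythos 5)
Your plan takes a genuinely different route from the paper, which reduces from Hamiltonian cycle: the paper observes that over all connected graphs on a fixed node set, $\mathsf{Katz}(u,v)$ is minimized and $\mathsf{ER}(u,v)$ is maximized precisely when the surviving graph is a Hamiltonian path with $u$ and $v$ as endpoints, sets the threshold to this global extremum, and thereby reduces both decision problems to ``can $k$ deletions leave a Hamiltonian $u$--$v$ path,'' sidestepping every question about which edges an optimal attacker prefers. Your vertex-cover gadget, as designed, has a concrete structural flaw that blocks both directions. Because the connector $c_e$ for $e=(v_i,v_j)$ is adjacent to $b_i$ (the $v$-side of gate $i$) and to $a_j$ (the $u$-side of gate $j$), the graph contains the length-$4$ walk $u - a_j - c_e - b_i - v$, which uses neither gate and cannot be destroyed by any gate deletion. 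For Katz with small $\beta$ these $|E|$ length-$4$ walks strictly dominate the length-$6$ edge-branches in which you encode the cover condition, and they falsify your domination claim: deleting $(u,a_j)$ removes the same length-$3$ vertex path as deleting $g_j$ and, in addition, every length-$4$ shortcut through a connector attached to $a_j$, so at the orders of $\beta$ that matter a non-gate deletion can be strictly better than a gate deletion, and the ``only if'' direction (an attack achieving the threshold yields a cover) is unproven and plausibly false. For ER the same shortcuts, together with the fact that distinct branches share the nodes $a_i,b_i$, mean the $u$--$v$ network is not series-parallel, so the additive branch-by-branch resistance computation you invoke is unavailable and no separating threshold has actually been exhibited.

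More generally, each step you flag as delicate---bounding the infinite walk tail, proving gate deletions are optimal, and computing a threshold that cleanly separates covers from non-covers---is exactly where the proof has to live, and none is carried out; Lemmas~\ref{theorem-Katz} and~\ref{theorem-ER} only tell you the attacker exhausts the budget, not where to spend it. If you wish to salvage a VC-style reduction you would at minimum need to redesign the connectors so that no $u$--$v$ walk can traverse a connector without passing through both of its gates (and then still control the tail and the non-gate deletions); alternatively, adopting the paper's strategy of thresholding at an extremal value attained only by a rigid structure removes the need for any of these arguments. As written, the reduction does not go through.
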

\begin{proof}
We consider the decision version of minimizing Katz, termed as $P_{K}$, which is to decide whether one can delete $k$ edges to make  $\mathsf{Katz}(u,v) \leq q$ given a graph $\mathcal{Q}$ and a target node pair $(u,v)$ in $\mathcal{Q}$. Similarly, we consider the decision version of maximizing ER, termed as $P_{E}$: which is to decide whether one can delete $k$ edges to make  $\mathsf{ER}(u,v) \geq q$ given a graph $\mathcal{Q}$ and a target node pair $(u,v)$ in $\mathcal{Q}$.

We use the Hamiltonian cycle problem, termed $P_H$, for reduction. $P_H$ is to decide whether there exists a circle that visits each nodes in a given connected graph $G$ exactly once (thus called Hamiltonian circle).

Before reduction, we first consider the minimum of Katz and maximum of ER between two nodes $u$ and $v$ in all possible connected graphs over a fixed node set. By the definition of Katz similarity, $\mathsf{Katz}(u,v)$ is minimized when the graph is a \emph{string} with $u$ and $v$ as two end nodes and all others as inner nodes in that string; that is the graph over that set of nodes is a Hamiltonian path with $u$ and $v$ as end nodes. We denote the minimum value of $\mathsf{Katz}(u,v)$ in this case as $min_K$. Similarity, by the definition of effective resistance, $\mathsf{ER}(u,v)$ is maximized when the graph is also a Hamiltonian path over that set of nodes with  $u$ and $v$ as the two end nodes. We assume that all edges have equal resistance. We denote the maximum value of $\mathsf{ER}(u,v)$ in this case as $max_E$. 

We then set $q=min_K$ in the decision problem $P_E$ and set $q= max_E$ in $P_E$. As a result, the two decision problems $P_E$ and $P_K$ are then both equivalent to the following decision problem, termed $P_S$: given a graph $\mathcal{Q}$ and two nodes $u$ and $v$ in $\mathcal{Q}$, can we delete $k$ edges such that the remaining graph $\mathcal{S}$ forms a string (i.e., a Hamiltonian path) with  $u$ and $v$ as two end nodes?

Now the reduction. Given an instance of Hamiltonian circle (i.e., a graph $\mathcal{G}=(V,E)$), we construct a new graph $\mathcal{Q}$ from $\mathcal{G}$ in the following steps:
\begin{itemize}
	\item Select an arbitrary node $w$ in $\mathcal{G}$. Let $N(w) = \{l_1,l_2,\cdots,l_W\}$ be the neighbors of $w$, where $W= |N(w)|$.
	\item Add two nodes $u$ and $v$. 
	\item Add edge $(u,w)$ and edges $(v,l_i)$, $\forall l_i \in N(w)$.
\end{itemize}
The resulting graph is then the graph $\mathcal{Q}$ in decision problem $P_S$, where the budget $k = W + |E| -|V|$. Below is an example showing the construction of $\mathcal{Q}$ and the process of deleting edges. 

\begin{figure}[ht]
	\centering
	\includegraphics[scale=0.2]{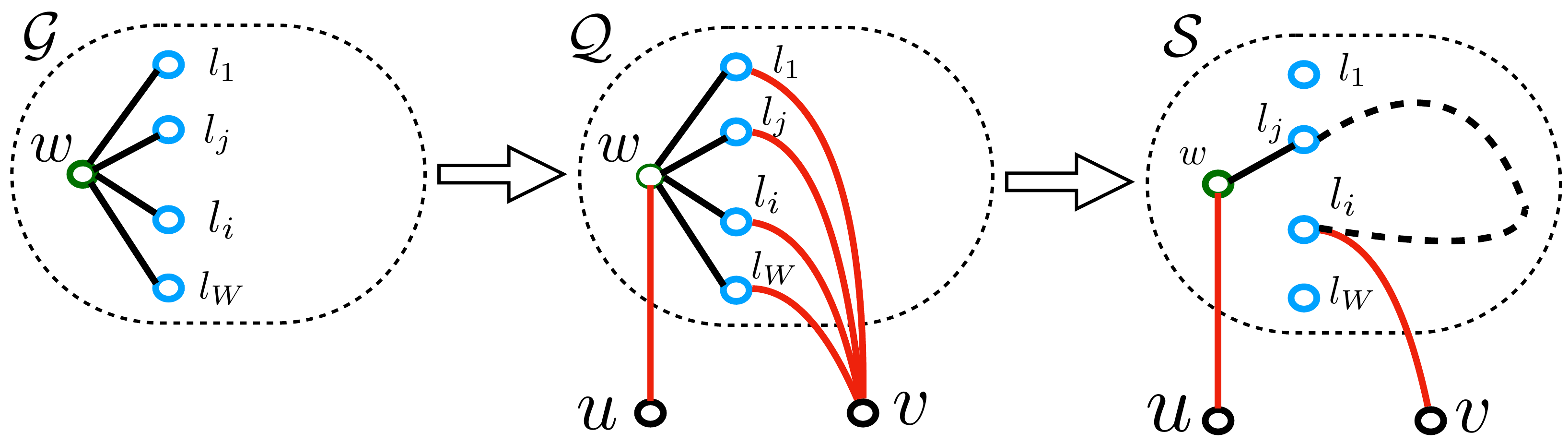}
	\caption{Illustration: construct graph $\mathcal{Q}$ from $\mathcal{G}$ and delete edges from $\mathcal{Q}$ to obtain $\mathcal{S}$. }
	\label{fig-plot}
\end{figure}

We now show that problem $P_H$ and problem $P_S$ are equivalent. 

First, we show if there exists a Hamiltonian circle in $\mathcal{G}$, then we can delete $k = W + |E| -|V|$ edges such that the measurement (Katz or ER) between $u$ and $v$ in graph $\mathcal{Q}$ is $q$. Assume the Hamiltonian circle travels to $w$ through edge  $(l_i,w)$ and leaves $w$ through edge $(w,l_j)$. We then 1) delete $(W-1)$ edges $(v,l_t)$ for each $l_t \in N(W)$ and $l_t \neq l_i$; 2) delete all $|E| -|V|$ edges in $G$ that do not appear in the Hamiltonian circle; 3) delete edge $(w,l_i)$. Thus, we deleted a total of $W + |E| -|V|$ edges. After deleting all these $k$ edges, in the remaining graph $\mathcal{S}$, there exists a Hamiltonian path between $w$ and $l_i$. As $u$ only connects to $w$ and $v$ only connects to $l_i$, the remaining graph forms a  Hamiltonian path between $u$ and $v$. As a result, the measurement between $u$ and $v$ equals $q$.

Second, we show if we can remove $k =  W + |E| -|V|$ edges from $\mathcal{Q}$ such that the remaining graph $\mathcal{S}$ forms a Hamiltonian path between $u$ and $v$, then we can find a Hamiltonian circle in the graph $\mathcal{G}$. Suppose in the reaming string, $v$ connects to $l_i$ and $w$ connects to $l_j$. As $u$ connects only to $w$ in graph $\mathcal{Q}$, $u$ must connect to $w$ in $\mathcal{S}$. From the construction of $\mathcal{Q}$, the total number of edges of $\mathcal{Q}$ is $|E| +W +1$. After deleting $k$ edges, the remaining number of edges is $|V|+1$. Excluding the two edges $(u,w)$ and $(v,l_i)$, we know there are $|V|-1$ edges among the node set $V$ of the original graph $\mathcal{G}$. As the remaining graph is connected, there must exist a Hamiltonian path between $w$ and $l_i$. As $(w,l_i)$ is an edge in the graph $\mathcal{G}$, we have found a Hamiltonian circle in $\mathcal{G}$, consisting of the Hamiltonian path between $w$ and $l_i$ plus the edge $(w,l_i)$.

Thus, decision problem $P_S$ is NP-complete; minimizing Katz and maximizing ER (ACT) are NP-hard.
\end{proof}

\subsection{Practical Attack Strategies}\label{Sec-Practical-Attacks}
While computing an optimal attack on Katz and ACT is NP-Hard, we now devise approximate approaches which are highly effective in practice.

\subsubsection{Attacking Katz Similarity}
To attack Katz similarity, we transform the attacker's optimization problem into that of maximizing a monotone increasing submodular function.
We begin with the single-link case (i.e., $H$ is a singleton), and subsequently generalize to an arbitrary $H$. 
We define a set function $f(S_p)$ as follows.
Let $S_p \subseteq E_Q$ be a set of edges that an attacker chooses to delete. Let $A_p$ be the adjacency matrix of the graph $\mathcal{G}_Q$ after all the edges in $S_p$ are deleted. Define 
$$f(S_p) = \beta A_p + \beta^2 A_p^2 + \beta^3 A_p^3 + \cdots$$

Since there is a one-to-one mapping between the set $S_p$ and the matrix $A_p$, the function $f(S_p)$ is well-defined. We note that $f(S_p)$ gives the Katz similarity matrix of the graph $\mathcal{G}$ after all the edges in $S_p$ are deleted. We further define a set function
$$g_{uv}(S_p) = (K - f(S_p))_{uv},$$
where $K = f(\emptyset)$ (the Katz similarity matrix when no edges are deleted) and  $(\cdot)_{uv}$ denotes the $u$th row and $v$th column of a matrix. 
Clearly, when $S_p = \emptyset$, $g_{uv}(S_p) = 0$.

Then, \emph{Prob-Katz} is equivalent to 
\begin{align}
\label{OPT-5}
\max_{S_p\subset E_t} \  g_{uv}(S_p), \quad \text{s.t.} \ |S_p| = k
\end{align}

\begin{theorem}
	\label{Theorem-Katz}
	The set function $g_{uv}(S_p)$ is monotone increasing and submodular.
\end{theorem}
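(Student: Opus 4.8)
The plan is to establish the two properties separately, using throughout the combinatorial reading of the Katz series as a weighted count of walks.

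Monotonicity follows almost immediately from Lemma~\ref{theorem-Katz}. Suppose $S \subseteq S'$. Then $S'$ deletes a superset of the edges deleted by $S$, so the associated adjacency matrices satisfy $A_{S'} \leq A_S$ component-wise. By Lemma~\ref{theorem-Katz}, the $(u,v)$ entry of the Katz matrix is increasing in the adjacency matrix, hence $(f(S'))_{uv} \leq (f(S))_{uv}$. Since $g_{uv}(S_p) = K_{uv} - (f(S_p))_{uv}$ and $K_{uv}$ is a fixed constant, subtracting a smaller quantity yields a larger value, so $g_{uv}(S') \geq g_{uv}(S)$, which is what monotone increasing means.

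For submodularity, fix $S \subseteq S'$ and an edge $e \notin S'$, and write the marginal gain as $\Delta(e \mid S) = g_{uv}(S \cup \{e\}) - g_{uv}(S) = (f(S))_{uv} - (f(S \cup \{e\}))_{uv}$. Expanding each term through the defining power series and recalling that $(A_S^l)_{uv}$ counts the walks of length $l$ from $u$ to $v$ in the graph $G_S$ obtained after deleting $S$, the key observation is that a walk in $G_S$ survives in $G_{S \cup \{e\}}$ exactly when it avoids $e$. Consequently $(A_S^l)_{uv} - (A_{S \cup \{e\}}^l)_{uv}$ equals $N_l(e,S)$, the number of length-$l$ walks from $u$ to $v$ in $G_S$ that traverse $e$ at least once, and $\Delta(e \mid S) = \sum_{l \geq 1} \beta^l N_l(e,S)$ with every coefficient $\beta^l > 0$.

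The crux is then to show $N_l(e,S) \geq N_l(e,S')$ for each $l$. Because $S \subseteq S'$, the graph $G_{S'}$ is a subgraph of $G_S$, so every $u$--$v$ walk in $G_{S'}$ is also a walk in $G_S$; in particular, any such walk that traverses $e$ in $G_{S'}$ traverses $e$ in $G_S$ as well. Hence the family of length-$l$ walks through $e$ can only shrink as more edges are removed, giving the termwise inequality $N_l(e,S) \geq N_l(e,S')$. Weighting by $\beta^l > 0$ and summing yields $\Delta(e \mid S) \geq \Delta(e \mid S')$, which is precisely the submodularity condition. The main point requiring care is the walk-counting decomposition, namely that deleting $e$ removes exactly the walks using $e$, together with the convergence of the infinite series: I would invoke the spectral-radius condition $\beta \rho(A) < 1$ implicit in the Katz definition so that absolute convergence legitimizes the term-by-term comparison.
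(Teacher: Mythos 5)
Your proof is correct, and the monotonicity half coincides with the paper's (both are immediate from Lemma~\ref{theorem-Katz}). For submodularity, however, you take a genuinely different route. The paper works purely algebraically: it writes the marginal change term-by-term as $\beta^s\bigl[(A_p-G)^s-(A_p)^s-(A_q-G)^s+(A_q)^s\bigr]$ and proves the required sign by induction on $s$, manipulating matrix inequalities with the perturbation matrices $G$ (for the edge $e$) and $F=A_p-A_q\geq \mathbf{0}$. You instead exploit the combinatorial meaning of $(A^l)_{uv}$ as a walk count: deleting $e$ removes exactly the length-$l$ walks that traverse $e$, so the marginal gain is $\sum_l \beta^l N_l(e,S)$, and the termwise inequality $N_l(e,S)\geq N_l(e,S')$ follows from the bare set inclusion of walks through $e$ in the nested subgraphs. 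Your argument is shorter and more transparent --- it replaces the induction with a one-line injection of walk families and makes the nonnegativity of every term manifest --- while the paper's algebraic version generalizes more readily to settings where the matrix entries lack a counting interpretation (e.g., weighted adjacency matrices). Both proofs need the same convergence caveat on $\beta$, which you correctly flag via the spectral-radius condition; since the spectral radius of a nonnegative matrix can only decrease when entries are zeroed out, convergence for the original $A$ suffices for all subgraphs, so your term-by-term comparison is legitimate.
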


\begin{proof}
	To prove that $g_{uv}$ is monotone increasing, we need to show that $\forall S_p \subset S_q \subset Q$, $g_{uv}(S_p) \leq g_{uv}(S_q)$. It is equivalent to show $(f(S_p))_{uv} \geq (f(S_q))_{uv}$. We note that $(f(S_p))_{uv}$ and $(f(S_q))_{uv}$ are the Katz similarity between $u$ and $v$ after the edges in $S_p$ and $S_q$ are deleted, respectively. Theorem \ref{theorem-Katz} states that the Katz similarity will decrease as more edges are deleted. Since $S_p \subset S_q$, we have $f(S_p) \geq f(S_q)$. Thus, $g_{uv}(S_p) \leq g_{uv}(S_q)$.
	
	Next, we prove $g_{uv}$ is submodular. Let $e \in E_t\setminus S_q$ be an edge between node $i$ and node $j$ in the graph. Let $G$ be an $n\times n$ matrix where $G_{ij} = G_{ji} = 1$ and the rest of the entries are $0$. Then we have the set $S_p \cup e$ is associated with $A_p -G$ and $S_q \cup e$ is associated with $A_q - G$. For a set $S$, let $\Delta (e|S) = f(S\cup e) - f(S)$.  Then we need to show 
	$$\Delta (e|S_p) \leq \Delta(e|S_q).$$
	
	Denote the $t$th item of $\Delta(e|S)$ as $\Delta^{(t)}(e|S)$. In the following, we will first prove $\Delta^{(t)}(e|S_p) \leq \Delta^{(t)}(e|S_q)$ by induction. Assume that the inequality holds for $t = s$ (it's straightforward to verify the case for $t=1$ and $t=2$). That is 
	\begin{eqnarray}
	\beta^s [(A_p - G)^s - (A_p)^s - (A_q - G)^s + (A_q)^s] \leq \mathbf{0}.
	\end{eqnarray}
	When $t = s+1$, we have
	\begin{eqnarray*}
		&&(\Delta^{(s+1)}(e|S_p) - \Delta^{(s+1)}(e|S_q))/\beta^{s+1}\\
		&=&(A_p - G)^{s+1} - (A_p)^{s+1} - (A_q - G)^{s+1} + (A_q)^{s+1}\\
		&=&(A_p - G)^s A_p - (A_p)^{s+1} - (A_q-G)^s A_q + (A_q)^{s+1} \\
		&-& [(A_p-G)^{s+1} - (A_q-G)^{s+1}]G\\
		&\leq& (A_p - G)^s A_p - (A_p)^{s+1} - (A_q-G)^s A_q + (A_q)^{s+1}
	\end{eqnarray*}
	The inequality comes from the fact that $(A_p-G) \geq (A_q -G)$ when $G \geq \mathbf{0}$. Furthermore, since $S_p \subset S_q$, we have $A_p = A_q +F$ for some $F \geq \mathbf{0}$. Thus,
	\begin{eqnarray*}
		&&(\Delta^{(s+1)}(e|S_p) - \Delta^{(s+1)}(e|S_q))/\beta^{s+1}\\
		&\leq&(A_p - G)^s(A_q +F) - (A_p)^s (A_q + F) \\
		&-& (A_q - G)^s A_q + (A_q)^{s+1}\\
		&=&[(A_p - G)^s  - (A_p)^s - (A_q-G)^s + (A_q)^s]A_q \\
		& +& [(A_p - G)^s - (A_p)^s]F\\
		&\leq& \mathbf{0}
	\end{eqnarray*}
	
	By induction, we have $\Delta^{(t)}(e|S_p) \leq \Delta^{(t)}(e|S_q)$ for $t =1,2,3,\cdots$. Note that when $\beta$ is chosen to be less than the reciprocal of the maximum of the eigenvalues of $A_q -G$, the sum will converge. Thus, $\Delta(e|S_p) \leq \Delta(e|S_q)$.
\end{proof}

Next, for the multi-link case, the total similarity $f_t = \sum_{i,j} w_{ij}K_{ij}$. Let $F(S)$ be a function of the set of deleted edges, defined as
$$F(S) = \beta A_S + \beta^2 A_S^2 + \beta^3 A_S^3 + \cdots,$$
where $A_S$ denotes the adjacency matrix after all edges in $S$ are deleted. Note that $F(S)$ gives the Katz similarity matrix when edges in $S$ are deleted. Further define $g_{ij}(S) = (K^0-F(S))_{ij}$, where $K^0$ is the original Katz similarity matrix.  Let $G_t(S) = \sum_{i,j} w_{ij}g_{ij}(S)$. By definition, we have $G_t(S) = \sum_{i,j} w_{ij}K^0_{ij} - f_t$. Thus, minimizing $f_t$ is equivalent to 
$$\max_{S \subset E_Q}\  G_t(S), \quad \text{s.t.}\ |S|\leq k.$$
The following result is then a direct corollary of Theorem~\ref{Theorem-Katz}.
\begin{corollary}
	$G_t(S)$ is monotone increasing and submodular.
\end{corollary}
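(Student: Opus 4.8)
The plan is to derive the corollary directly from Theorem~\ref{Theorem-Katz} by exploiting the fact that $G_t(S)$ is a \emph{nonnegative weighted sum} of the single-link functions $g_{ij}(S)$ already shown to be monotone and submodular. The key observation is that both monotonicity and submodularity are preserved under nonnegative linear combinations, so no new combinatorial argument is needed; the heavy lifting (the inductive matrix inequality controlling $\Delta^{(t)}$) has already been done for each individual term.

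First I would make precise the connection between the multi-link objects and the single-link objects of Theorem~\ref{Theorem-Katz}. For each target pair $(i,j)$ with $(u_i,u_j)\in H$, the function $g_{ij}(S) = (K^0 - F(S))_{ij}$ is exactly the single-link function $g_{uv}(S_p)$ of that theorem, with $u,v$ instantiated as $i,j$ and $K^0 = f(\emptyset)$. (Note $F(S)$ and $f(S_p)$ are the same Katz-matrix-valued set function, just renamed in the multi-link discussion.) Hence Theorem~\ref{Theorem-Katz} tells us directly that each $g_{ij}$ is monotone increasing and submodular.

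Next I would invoke closure properties. For monotonicity: since each $g_{ij}(S_p)\le g_{ij}(S_q)$ whenever $S_p\subset S_q$, and the weights $w_{ij}\ge 0$, summing the inequalities term by term gives $G_t(S_p)=\sum_{i,j} w_{ij} g_{ij}(S_p)\le \sum_{i,j} w_{ij} g_{ij}(S_q)=G_t(S_q)$. For submodularity: writing $\Delta_{ij}(e\mid S)=g_{ij}(S\cup e)-g_{ij}(S)$, submodularity of each $g_{ij}$ says $\Delta_{ij}(e\mid S_p)\ge \Delta_{ij}(e\mid S_q)$ for $S_p\subset S_q$; since the marginal gain of $G_t$ decomposes as $\sum_{i,j} w_{ij}\Delta_{ij}(e\mid S)$, multiplying each inequality by $w_{ij}\ge 0$ and summing yields the diminishing-returns inequality for $G_t$. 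Thus $G_t$ is monotone increasing and submodular.

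There is essentially no hard obstacle here, which is precisely why the result is stated as a corollary rather than a theorem: the only thing to be careful about is that the preservation of submodularity under summation requires the weights to be nonnegative, which holds by assumption (the $w_{ij}$ represent relative importance of hiding links). I would state this nonnegativity explicitly so the term-by-term summation of the marginal-gain inequalities is valid, and I would note once that convergence of the Katz series is guaranteed under the same eigenvalue condition on $\beta$ used in Theorem~\ref{Theorem-Katz}, so each $g_{ij}$ (and hence $G_t$) is well-defined for every $S$. With these remarks the corollary follows immediately.
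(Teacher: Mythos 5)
Your proposal is correct and follows essentially the same route as the paper: both derive the corollary from Theorem~\ref{Theorem-Katz} by observing that $G_t(S)$ is a nonnegative linear combination of the single-link functions $g_{ij}(S)$, and that monotonicity and submodularity are preserved under such combinations. Your version merely spells out the term-by-term summation of the marginal-gain inequalities that the paper cites as a known closure property.
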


\begin{proof}
	This is an immediate conclusion of two results. First, $g_{ij}(S)$ is monotone increasing and submodular in $S$ as proved in Theorem \ref{Theorem-Katz}. Second, a positive linear combination of submodular functions is submodular \cite{nemhauser1978analysis}. As $G_t(S)$ is the sum of $g_{ij}(S)$, $G_t(S)$ is monotone increasing and submodular.
\end{proof}

As a result, minimizing the total Katz similarity is equivalent to maximizing a monotone increasing submodular function under cardinality constraint.  We can achieve a $(1-1/e)$ approximation by applying a simple iterative greedy algorithm in which we delete one edge at a time that maximizes the marginal impact on the objective. We call this resulting algorithm \emph{Greedy-Katz}.

\subsubsection{Attacking ACT}
From the analysis of minimizing Katz similarity, it is natural to investigate submodularity of the effective resistance or ACT as a function of the set of edges. Unfortunately, counter examples show that the effective resistance is neither submodular nor supermodular. Consequently, we need to leverage a different kind of structure for ER.

Our first step is to approximate the objective function $\mathsf{ER}(u,v)$ based on the results by \citeauthor{von2014hitting}~\shortcite{von2014hitting}, who show that $\mathsf{ER}(u,v)$ can be approximated by $\frac{1}{d(u)} + \frac{1}{d(v)}$ for large geometric graphs as well as random graphs with given expected degrees.
Consequently, we use the approximation $\mathsf{ER}(u,v) \approx \mathsf{ER}_{ap}(u,v) = \frac{1}{d(u)} + \frac{1}{d(v)}$. Then the total effective resistance is approximated as $\mathsf{ER}(H) \approx \mathsf{ER}_{ap}(H) = \sum_{ij}w_{ij}(\frac{1}{d(u_i)}+\frac{1}{d(u_j)}) = \sum_{i=1}^n \frac{W_i}{d(u_i)}$, where $W_i >0$ is some constant weight associated with each $u_i$. Let $D_i$ be the original degree of node $u_i$ and $z_i$ be an integer variable denoting the number of deleted edges connecting to $u_i$. Then maximizing $\mathsf{ER}_{ap}(H)$ is equivalent to
\begin{align}
\label{OPT-ACT}
\max_{\mathbf{z}}\ \sum_{i=1}^n \frac{W_i}{D_i - z_i}, \quad \text{s.t.} \ \sum_{i=1}^n z_i \leq k, \ z_i\in [0,k].
\end{align}

We assume that deleting edges would not make the graph disconnected. That is $\forall i \in [1,n]$, $k < D_i$.

We formulate the above problem as a linear integer program. Specifically, let $\Delta_{ij}$ be the decrease in $\mathsf{ER}_{ap}(H)$ after $j$ edges connecting to node $u_i$ are deleted. As any such $j$ edges will cause the same decrease, the value of each $\Delta_{ij}$ for $j = 0,1,\cdots,k$ could be efficiently computed in advance. We use a binary variable $h_{ij} = 1$ to denote that the attacker chooses to delete such $j$ edges; otherwise, $h_{ij} = 0$. Then problem (\ref{OPT-ACT}) is equivalent to 
\begin{align*}
\max_{\mathbf{h}}\ \sum_{i=1}^n \sum_{j=0}^k (\frac{W_i}{D_i}- \Delta_{ij}) h_{ij},\ 
\text{s.t.} \ \sum_{i=1}^n \sum_{j=0}^k  h_{ij} \leq k,\forall i,\sum_{j=0}^k h_{ij} \leq 1,
\end{align*}

The above problem is a linear program of $(k+1) \times n$ binary variables with $(n+1)$ linear constraints. A numerical solution \cite{gurobi} gives the number of edges incident to each node that needs to be deleted.

\paragraph{A Note on a Special Case} We can consider problem~(\ref{OPT-ACT}) for the two special cases introduced in Section~\ref{Sec-special-case}. 
For both cases (attacking a single link and attacking a group of nodes), the weights $W_i$ associated with each node are equal. Under this condition, we can prove that the optimal solution to problem~(\ref{OPT-ACT}) is to delete all $k$ edges connecting the node with the smallest degree. Details are provided in the extended version of the paper.

\section{Experiments}
\label{Sec-exp}
Our experiments use two classes of networks:
1) randomly generated scale-free networks and 2) a Facebook friendship
network \cite{snapnets}. In the scale-free networks, the degree distribution satisfies
$P(k)\propto k^{-\gamma}$, where $\gamma$ is a parameter.

\emph{Baseline algorithms}. We compare our algorithms with two baseline algorithms. We term the first one as \emph{RandomDel}, which randomly deletes the edges connected to the target nodes. The second baseline, termed \emph{GreedyBase}, is a heuristic algorithm proposed in  \cite{waniek2018attack}. This algorithm will try to delete the link whose deletion will cause the largest decrease in the number of ``closed triads" as defined in \cite{waniek2018attack}. Our experiments show that while the performance of \emph{GreedyBase} varies regarding different metrics, \emph{RandomDel} performs poorly for all metrics (Fig.~\ref{fig-local}). Henceforth, we only compare our algorithm with \emph{GreedyBase} for global metrics (Fig.~\ref{fig-katz} and Fig.~\ref{fig-act}).

For local metrics, we evaluate \emph{Approx-Local} in the general
case. We consider a target set of size $20$. We select RA (CND metric)
and Sorensen (WCN metric) as two representatives, for which the
results are presented in Fig.~\ref{fig-local}. All similarity scores
are scaled to $1.0$ when no edges are deleted.  Due to space limit, we
only present the results on one scale of the scale-free network
($n=1000,\gamma = 2.0$) and Facebook network($n=786, m =12291$). 
A more comprehensive set of experiments is presented in the extended version.

We note that deleting a relatively small number of links can significantly decrease the similarities of a set of target links. The gap between the upper and lower bound functions, which reflects the approximation quality of \emph{Approx-Local}, is within $20\%$ of the original similarity.

For global metrics, we evaluate \emph{Greedy-Katz} and
\emph{Local-ACT} regarding a set of target links ($|H| = 20$) on
different scales of networks. As shown in Fig.~\ref{fig-katz} and
Fig.~\ref{fig-act}, the performances are significantly better than
those of the baseline algorithm. Additional results
for the special cases are provided in the extended version.

\begin{figure}[ht]
	\centering
	\begin{subfigure}[t]{0.23\textwidth}
		\centering
		\includegraphics[scale=0.25]{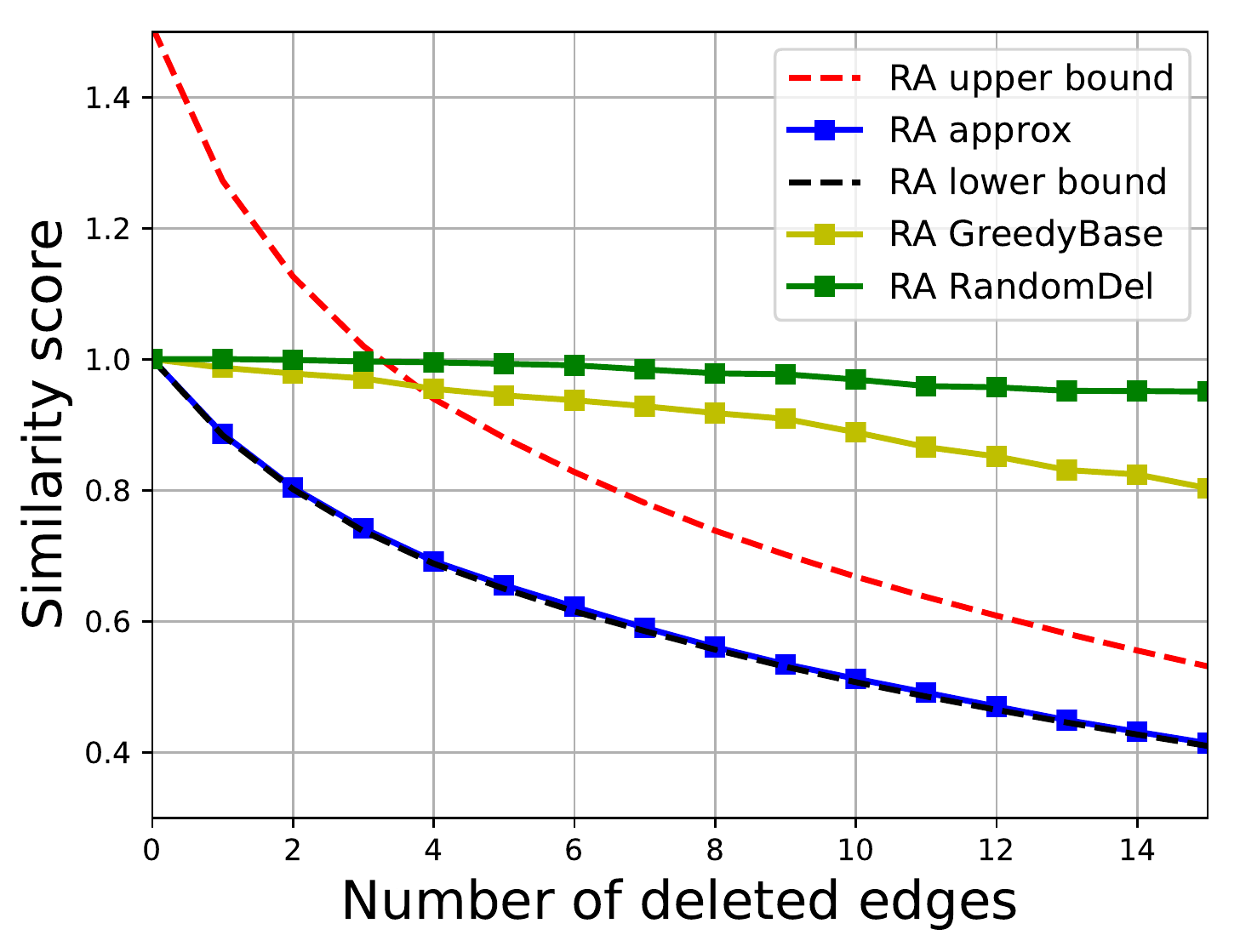}
		\caption{scale-free, RA}
	\end{subfigure}%
	\hfill
	\begin{subfigure}[t]{0.23\textwidth}
		\centering
		\includegraphics[scale=0.25]{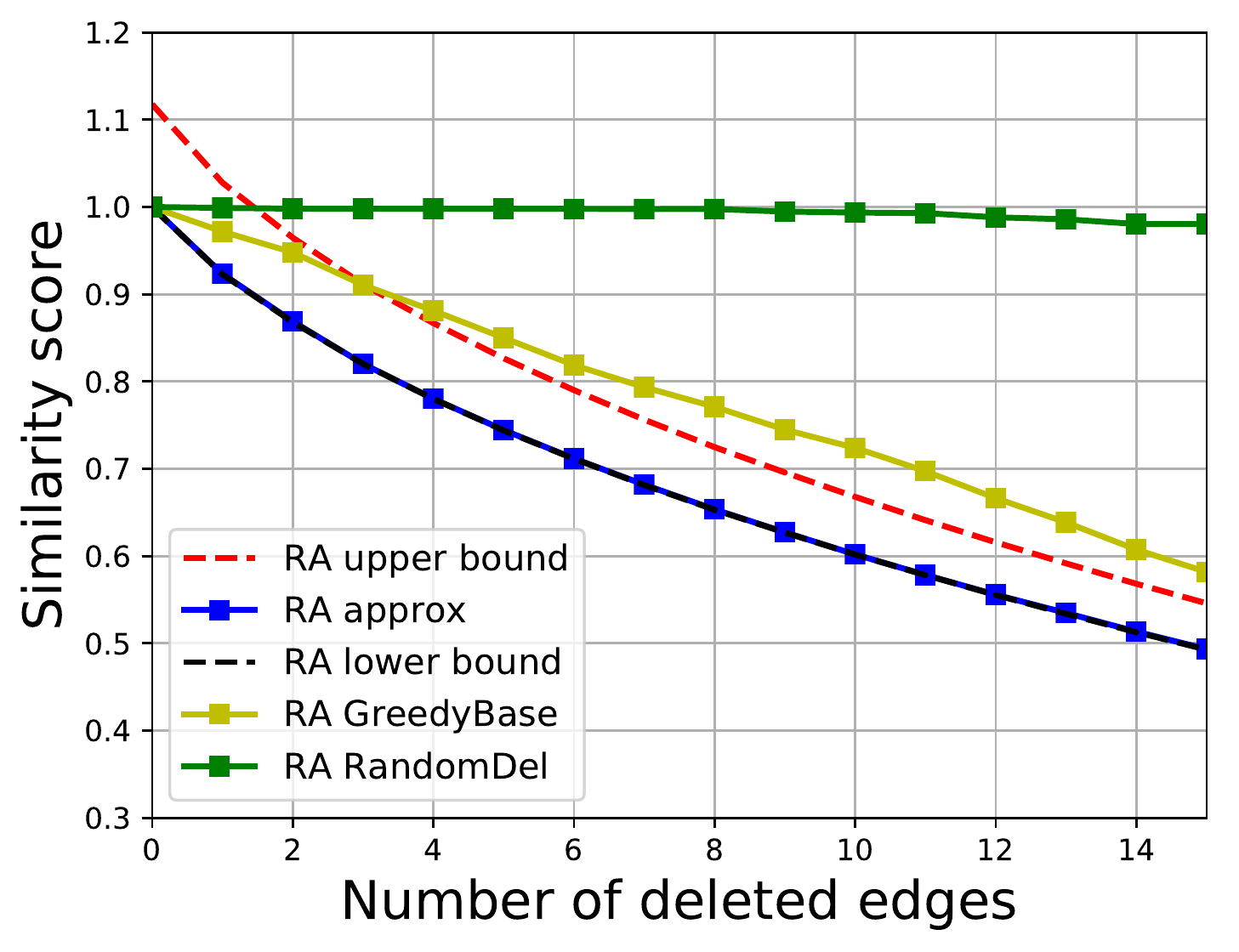}
		\caption{Facebook, RA}
	\end{subfigure}

	\centering
	\begin{subfigure}[t]{0.23\textwidth}
		\centering
		\includegraphics[scale=0.25]{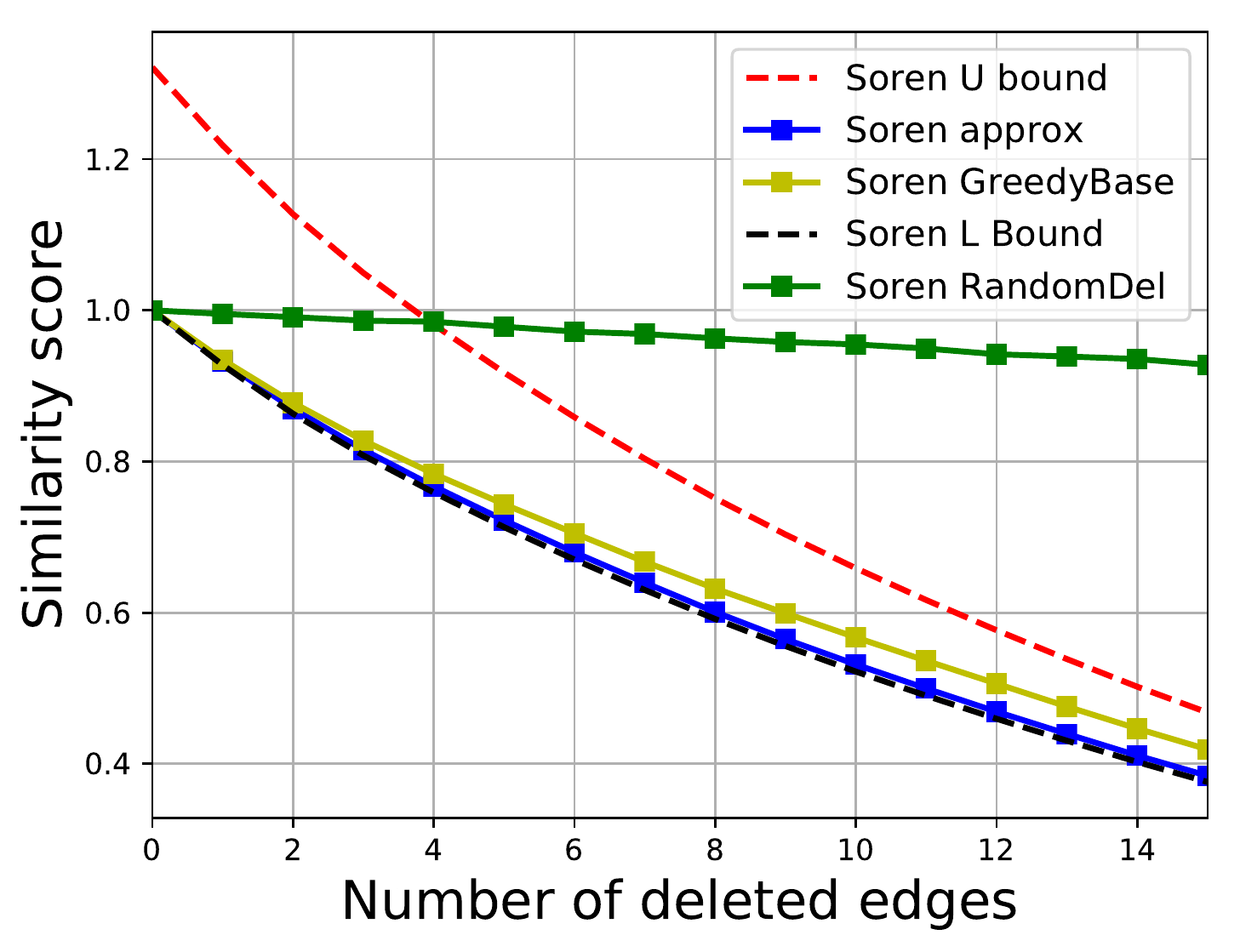}
		\caption{scale-free, S\o rensen}
	\end{subfigure}%
	\hfill
	\begin{subfigure}[t]{0.23\textwidth}
		\centering
		\includegraphics[scale=0.25]{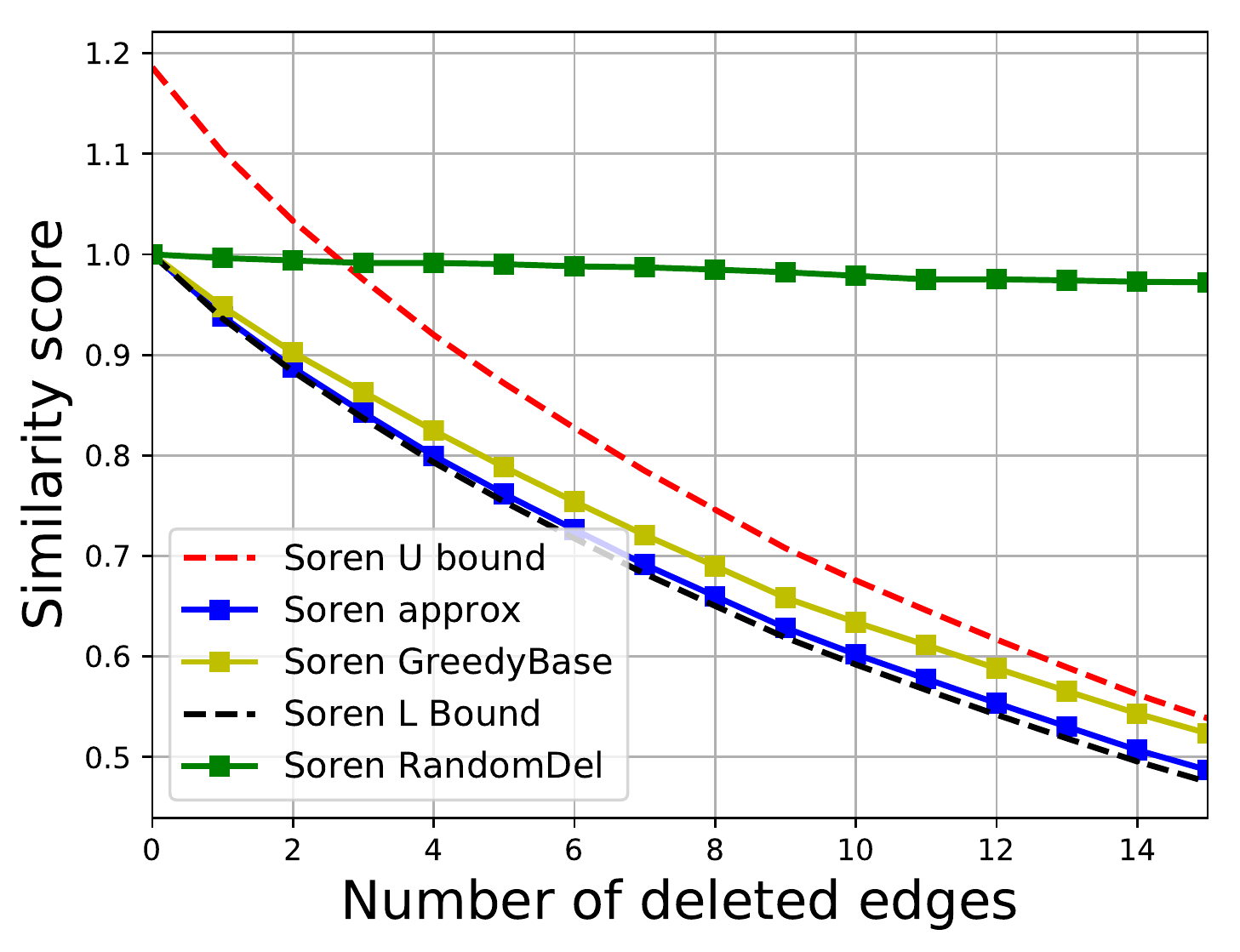}
		\caption{Facebook, S\o rensen}
	\end{subfigure}	
	\caption{Approx-Local vs. GreedyBase on CND (e.g., RA) and WCN (e.g., S\o rensen) metrics in general case.}
	\label{fig-local}
\end{figure}




\begin{figure}[ht]
	\centering
	\begin{subfigure}[t]{0.235\textwidth}
		\centering
		\includegraphics[scale=0.25]{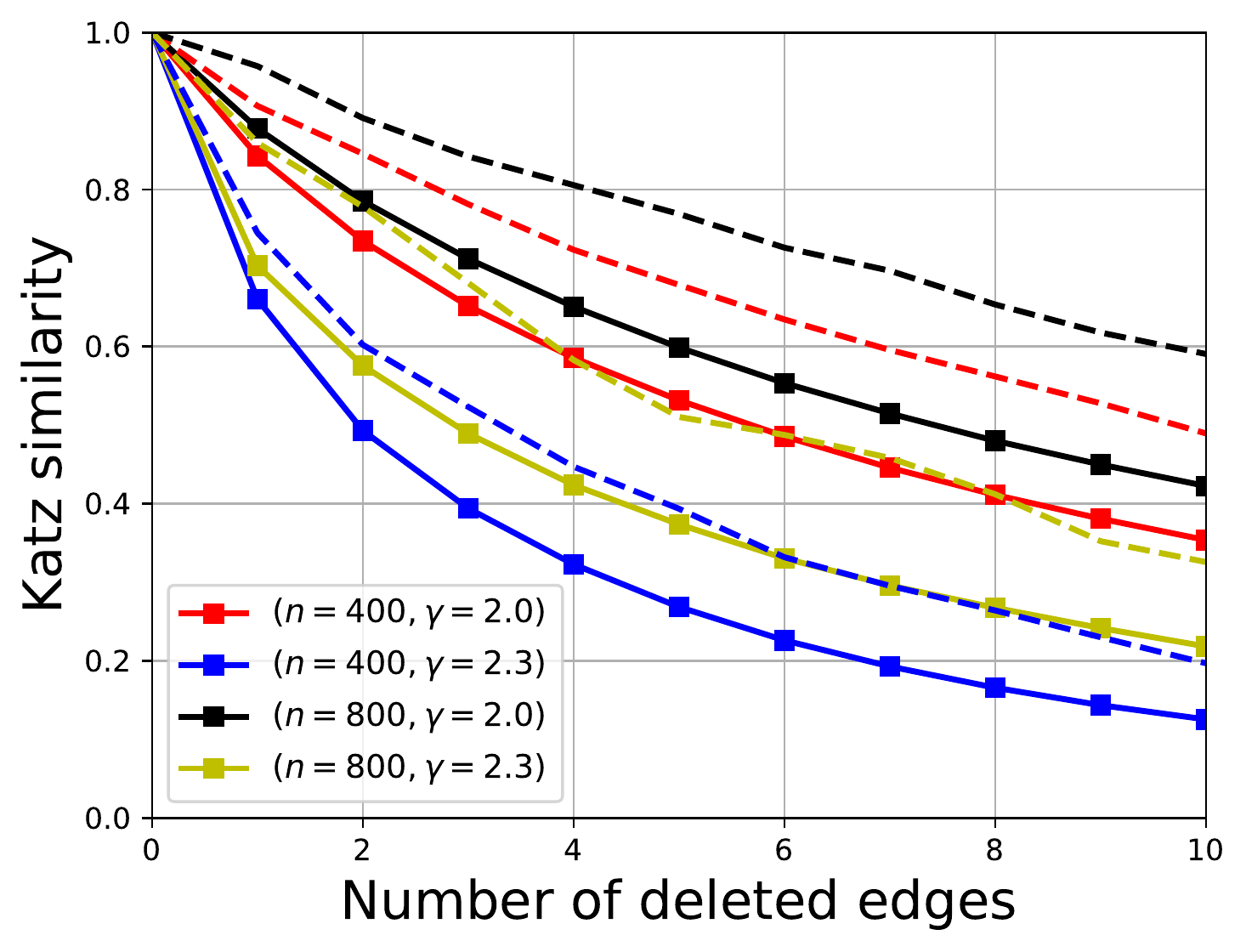}
		\caption{scale-free}
	\end{subfigure}%
	\hfill
	\begin{subfigure}[t]{0.235\textwidth}
		\centering
		\includegraphics[scale=0.25]{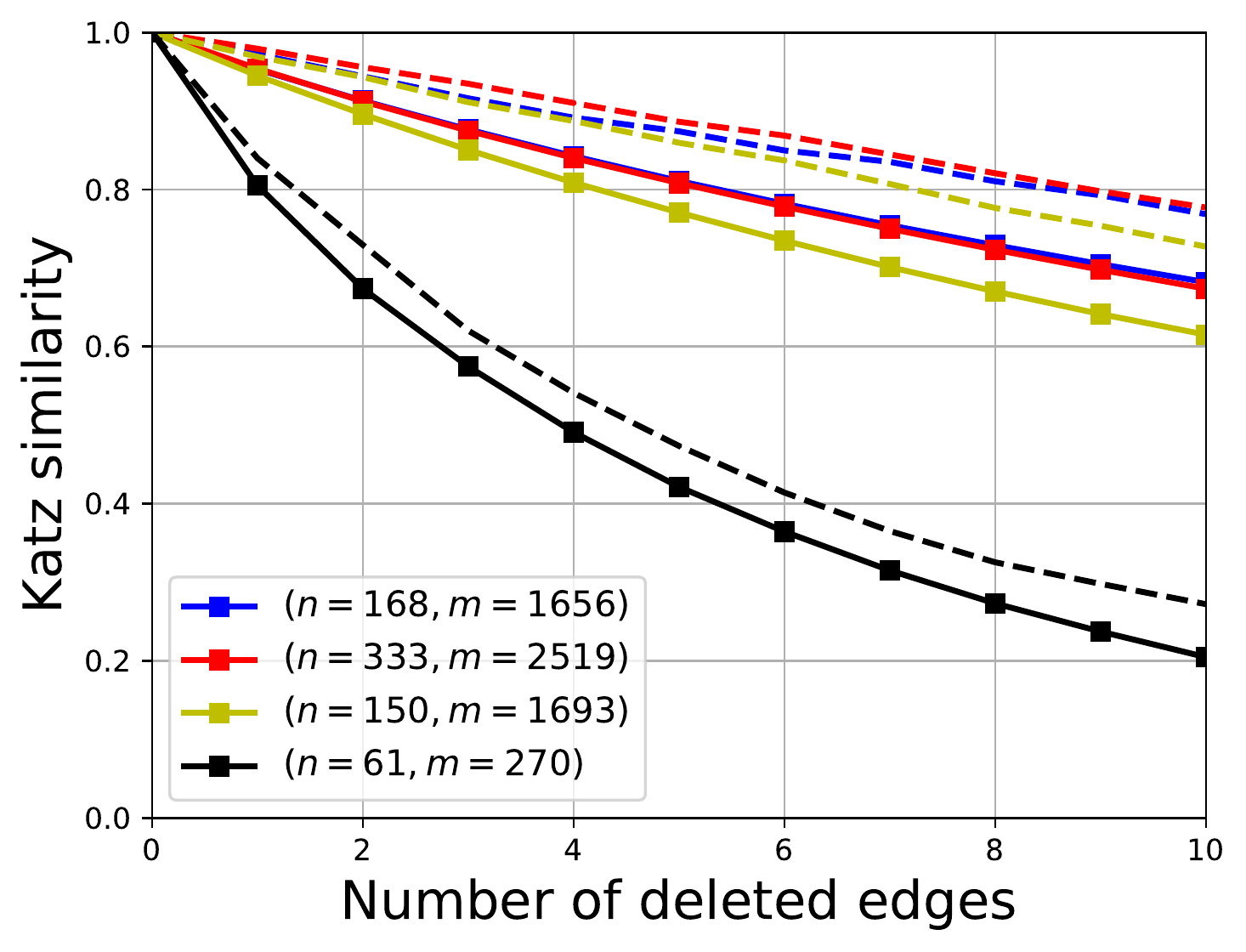}
		\caption{Facebook}
	\end{subfigure}	
	\caption{Greedy-Katz vs. GreedyBase on Katz similarity. Solid lines: Greedy-Katz. Dotted lines: GreedyBase}
	\label{fig-katz}
\end{figure}

\begin{figure}[ht]
	\centering
	\begin{subfigure}[t]{0.235\textwidth}
		\centering
		\includegraphics[scale=0.25]{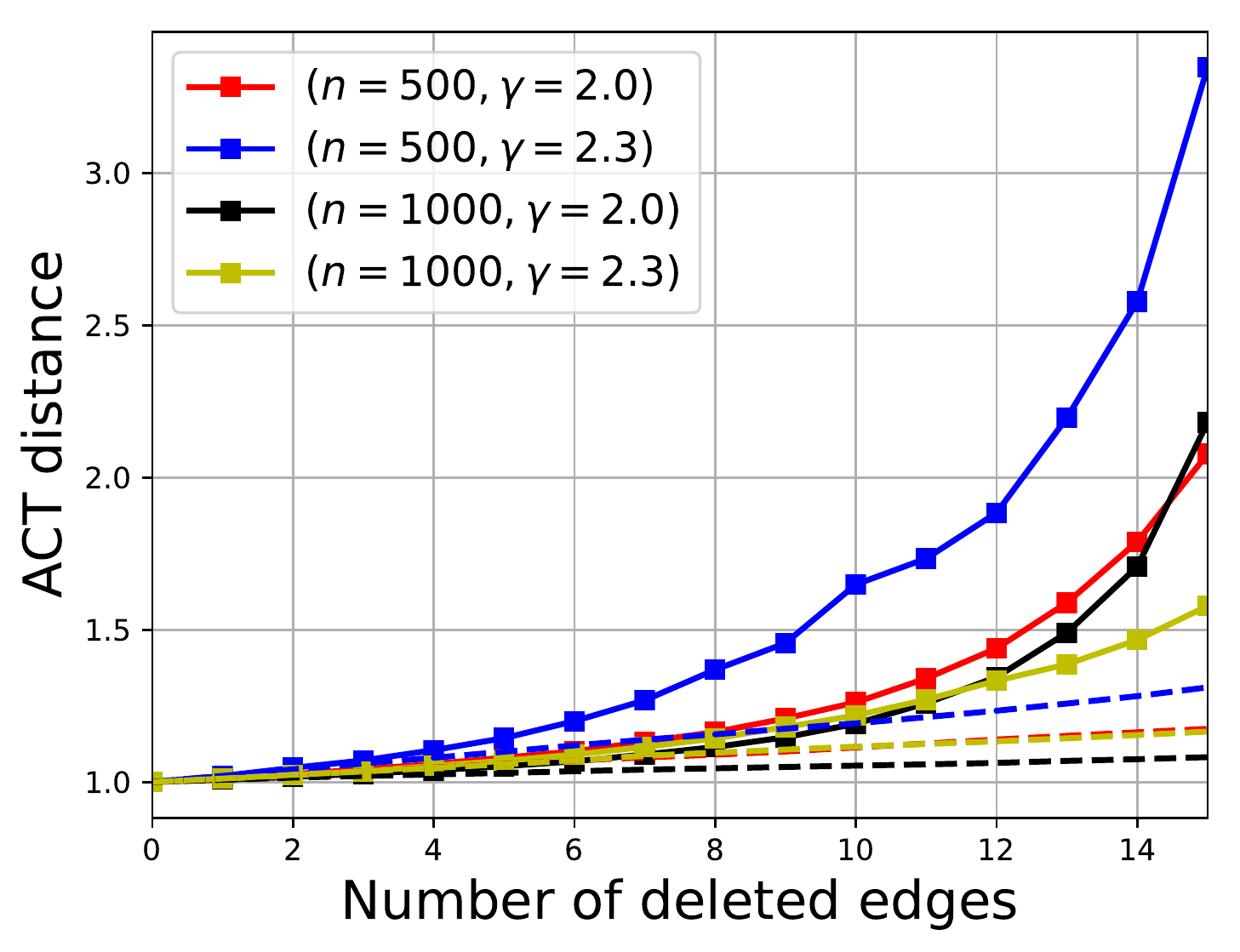}
		\caption{scale-free}
	\end{subfigure}%
	\hfill
	\begin{subfigure}[t]{0.235\textwidth}
		\centering
		\includegraphics[scale=0.25]{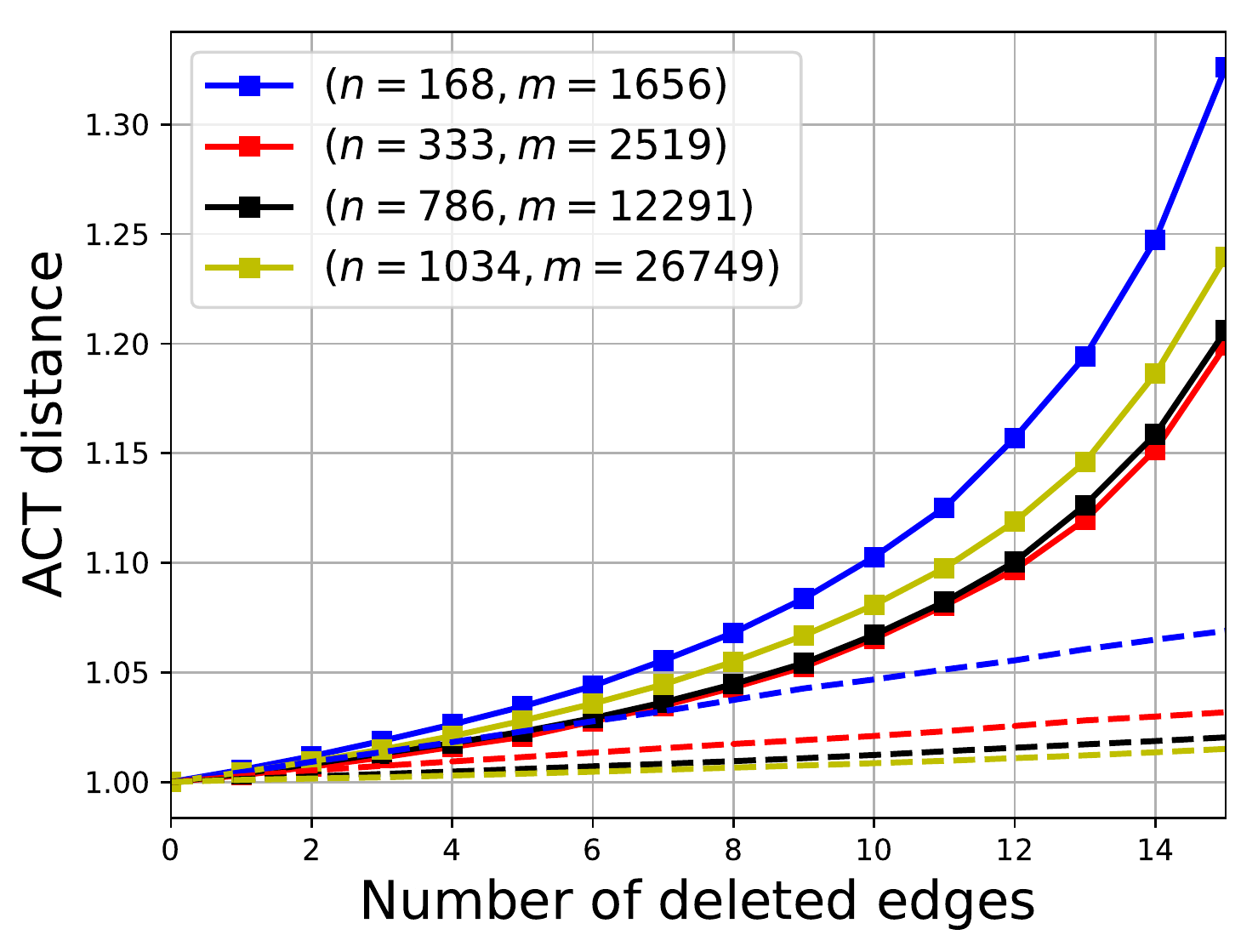}
		\caption{Facebook}
	\end{subfigure}	
	\caption{Local-ACT vs. GreedyBase on ACT distance. Solid lines: Local-ACT. Dotted lines: GreedyBase}
	\label{fig-act}
\end{figure}


\section{Conclusion}
We investigate the problem of hiding a set of target links in a network via minimizing the similarities of those links,  by deleting a limited number of edges. We divide similarity metrics associated with potential links into two broad classes: local metrics (CND and WCN) and global metrics (Katz and ACT). We prove that computing optimal attacks on all these metrics is NP-hard.

For local metrics, we proposed an algorithm minimizing the upper bounds of local metrics, which corresponds to maximizing submodular functions under cardinality constraints. Furthermore, we identify two special cases, attacking a single link and attacking a group of nodes, where the first case ensures optimal attacks for all local metrics and the latter ensures optimal attacks for CND metrics. For global metrics, we prove that even when attacking a single link, both the problem of minimizing Katz and that of maximizing ACT are NP-Hard. We then propose an efficient
greedy algorithm (Greedy-Katz) and a principled heuristic algorithm
(Local-ACT) for the two problems, respectively. Our experiments show
that our algorithms are highly effective
in practice and, in particular, significantly outperform a recently
proposed heuristic. Overall, the results in this paper greatly advance the algorithmic understanding of attacking similarity-based link prediction.
	
\bibliographystyle{ACM-Reference-Format}  
\bibliography{multilink_citation}  

\clearpage
\end{document}